\newtheorem{theorem}{Theorem}
\newtheorem{lemma}{Lemma}
\newtheorem{definition}{Definition}
\newcommand{\bigo}[1]{\ensuremath{{{O} \left( {#1} \right)}}}
\newcommand{\bigomega}[1]{\ensuremath{{\Omega \left( {#1} \right)}}}
\newcommand{\bigtheta}[1]{\ensuremath{{\Theta(#1)}}}
\newcommand{\eps}{\ensuremath{\varepsilon}}
\newcommand{\R}{\ensuremath{\mathbb{R}}}
\newcommand{\C}{\ensuremath{\mathbb{C}}}
\newcommand{\F}{\ensuremath{\mathbb{F}}}
\newcommand{\set}[1]{\ensuremath{\left\{ {#1} \right\}}}
\newcommand{\abs}[1]{\ensuremath{\left| {#1} \right|}}
\newcommand{\zo}{\ensuremath{\set{0,1}}}
\newcommand{\st}{\ensuremath{\;\big |\;}}
\newcommand{\event}{\ensuremath{\mathcal{E}}}
\newcommand{\poly}[1]{\ensuremath{\mathsf{poly}\left( {#1} \right)}}
\newcommand{\polylog}[1]{\ensuremath{\mathsf{polylog}\left( {#1} \right)}}
\newcommand{\mat}[1]{\ensuremath{\mathbf{#1}}}
\newcommand{\rank}[1]{\ensuremath{\mathsf{rank}\left({#1}\right)}}
\newcommand{\tr}[1]{\ensuremath{\mathsf{tr}\left({#1}\right)}}
\renewcommand{\vec}[1]{\ensuremath{\mathbf{#1}}}
\newcommand{\dist}[1]{\ensuremath{\mathcal{#1}}}
\newcommand{\D}{\ensuremath{\dist{D}}}
\newcommand{\G}{\ensuremath{\dist{G}}}
\newcommand{\diag}{\ensuremath{\mathsf{diag}}}
\newcommand{\dtv}[1]{\ensuremath{\mathsf{D_{TV}}\left(#1\right)}}
\newcommand{\eqdef}{\stackrel{\mbox{\tiny{def}}}{=}}
\newcommand{\disjoint}{\textsc{Disjointness}\xspace}
\begin{document}

\title{Querying a Matrix through Matrix-Vector Products\footnote{We want to thank Roman Vershynin and Yan Shuo Tan for the helpful comments. David Woodruff would like to thank the Chinese Academy of Sciences, as well as the Simons Institute for the Theory of Computing where part of this work was done. }}
%TODO Please add

%\titlerunning{Querying a Matrix through Matrix-Vector Products}%optional, please use if title is longer than one line

%TODO mandatory, please use full name; only 1 author per \author macro; first two parameters are mandatory, other parameters can be empty. Please provide at least the name of the affiliation and the country. The full address is optional

\author[1,2]{Xiaoming Sun}
\author[3]{David P. Woodruff}
\author[1,4]{Guang Yang}
\author[1,2]{Jialin Zhang}
	
\affil[1]{Institute of Computing Technology, Chinese Academy of Sciences, Beijing, China}
\affil[2]{University of Chinese Academy of Sciences, Beijing, China}
\affil[3]{Carnegie Mellon University, US}
\affil[4]{Conflux, Beijing, China}

\date{}

\maketitle

\begin{abstract}

We consider algorithms with access to
an unknown matrix $\mat{M} \in \mathbb{F}^{n \times d}$ via {\it matrix-vector products},
namely, the algorithm chooses vectors $\vec{v}^1, \ldots, \vec{v}^q$,
and observes $\mat{M}\vec{v}^1, \ldots, \mat{M}\vec{v}^q$.
Here the $\vec{v}^i$ can be randomized as well as chosen adaptively as
a function of $\mat{M}\vec{v}^1, \ldots, \mat{M}\vec{v}^{i-1}$. Motivated by applications
of sketching in distributed computation, linear algebra, and streaming models,
as well as connections to areas such as communication complexity and property
testing, we initiate the study of the number $q$ of queries
needed to solve various fundamental problems.
We study problems in three broad categories, including linear algebra, statistics problems, and graph problems.
For example, we consider the number of
queries required
to approximate the rank, trace, maximum eigenvalue,
and norms of a matrix $\mat{M}$;
to compute the AND/OR/Parity of each column or row of $\mat{M}$,
to decide whether there are identical columns or rows in $\mat{M}$ or
whether $\mat{M}$ is symmetric, diagonal, or unitary;
or to compute whether a graph defined by $\mat{M}$ is connected or triangle-free.
We also show separations for algorithms that are allowed to obtain
matrix-vector products only by querying vectors on the right, versus algorithms
that can query vectors on both the left and the right. We also show separations depending
on the underlying field the matrix-vector product occurs in.
For graph problems, we show separations depending on the form of the matrix (bipartite adjacency versus signed edge-vertex
incidence matrix) to represent the graph.

Surprisingly, this
fundamental model does not appear to have been studied on its own, and we
believe a thorough investigation of problems in this model would be beneficial
to a number of different application areas.
\end{abstract}

\section{Introduction}
 Suppose there is an unknown matrix $\mat{M} \in \mathbb{F}^{n \times d}$ that you can only access via a sequence of {\it matrix-vector products} $\mat{M} \cdot \vec{v}^1, \ldots, \mat{M} \cdot \vec{v}^q$, where we call the vectors $\vec{v}^1, \ldots, \vec{v}^q$ the {\it query vectors}, which can be chosen in a randomized, possibly adaptive way. By adaptive, we mean that $\vec{v}^i$ can depend on $\vec{v}^1, \ldots, \vec{v}^{i-1}$ as well as $\mat{M}\vec{v}^1, \ldots, \mat{M}\vec{v}^{i-1}$. Here $\mathbb{F}$ is a field, and we study different fields for different applications. Suppose our goal is to determine if $\mat{M}$ satisfies a specific property $\mathcal{P}$, such as having approximately full rank, or for example whether $\mat{M}$ has two identical columns. A natural question is the following:
\begin{center}
{\bf Question 1:} How many queries $q$ are necessary to determine if $\mat{M}$ has property $\mathcal{P}$?
\end{center}
A number of well-studied problems are special cases of this question, i.e., compressed sensing or sparse recovery, for which $\mat{M} \in \mathbb{R}^{1 \times d}$ is an approximately $k$-sparse vector, and one would like a number $q$ of queries close to $k$. It is known that if the query sequence is non-adaptive, meaning $\vec{v}^1, \ldots, \vec{v}^q$ are chosen before making any queries, then $q = \Theta(k \log(n/k))$ is necessary and sufficient \cite{c06,BIPW10} to recover an approximately $k$-sparse vector\footnote{Here the goal is to output a vector $\mat{M}'$ for which $\|\mat{M}-\mat{M}'\|_2 \leq (1+\epsilon)\|\mat{M}-\mat{M}_k\|_2$, where $\mat{M}_k$ is the best $k$-sparse approximation to $\mat{M}$, and $\epsilon$ is a constant.}. However, if the queries can be adaptive, then $q = O(k \log \log n)$ queries suffice \cite{ipw11}, while there is a lower bound of $\Omega(k + \log \log n)$ \cite{pw13} (see also recent work \cite{nsw018,kp19}).
%A related problem is when $\mat{M} \in \{0,1\}^{n \times n}$ represents the adjacency matrix of a graph and one would like to make queries to test if $\mat{M}$ is connected. Here it is known that $\Theta(\log^2 n)$ queries are necessary and sufficient \cite{agm12,ny18}, though we are not aware of any lower bound when the queries can be chosen adaptively.

The above problem is representative of an emerging field called {\it linear sketching} which is the underlying technique behind a number of algorithmic advances the past two decades.  In this model one queries $\mat{M} \cdot \vec{v}^1, \ldots, \mat{M} \cdot \vec{v}^r$ for non-adaptive queries $\vec{v}^1, \ldots, \vec{v}^r$. For brevity we write this as $\mat{M} \cdot \mat{V}$, where $\mat{V} \in \mathbb{F}^{d \times r}$ has $i$-th column equal to $\vec{v}^i$. Linear sketching has played a central role in the development of streaming algorithms \cite{ams96}. Perhaps more surprisingly, linear sketches are also known to achieve the minimal space necessary {\it of any, possibly non-linear, algorithm} for processing dynamic data streams under certain general conditions \cite{lnw14,ahlw16,kmsy18}, which is an essential result for proving a number of lower bounds for approximating matchings in a stream \cite{k15,akly16}. Linear sketching has also led to the fastest known algorithms for problems in numerical linear algebra, such as least squares regression and low rank approximation; for a survey see \cite{w14}. Note that given $\mat{M} \cdot \mat{V}$ and $\mat{M}' \cdot \mat{V}$, by linearity one can compute $(\mat{M}+\mat{M}') \cdot \mat{V} = \mat{M} \cdot \mat{V} + \mat{M}' \cdot \mat{V}$. This basic versatility property allows for fast updates in a data stream and mergeability in environments such as MapReduce and other distributed models of computation.

Given the applications above, we consider Question 1 an important question to understand for many different properties $\mathcal{P}$ of interest, which we describe in more detail below. A central goal of this work is to answer Question 1 for such properties and to propose this be a natural model of study in its own right.

One notable difference with our model and a number of appications of linear sketching is that we will allow for adaptive query sequences. In fact, our upper bounds will be non-adaptive, and our nearly matching lower bounds for each problem we consider will hold even for adaptive query sequences. Our model is also related to property testing, where one tries to infer properties of a large unknown object by (possibly adaptively) sampling a sublinear number of locations of that object. We argue that linear queries are a natural extension of sampling locations of an object, and that this is a natural ``sampling model'' not only because of the desired properties of the distributed, linear algebra, and streaming applications above, but sometimes also for physical constraints, e.g., in compressed sensing, where optical devices naturally capture linear measurements.

From a theoretical standpoint, any property testing algorithm, i.e., one that samples $q$ entries of $\mat{M}$, can be implemented in our model with $q$ linear queries. However, our model gives the algorithm much more flexibility.
From a lower bound perspective, as in the case of property testing \cite{bbm12}, some of our lower bounds will be derived from communication complexity. However, not all of our bounds can be proved this way. For example, one notable result we show is an optimal lower bound on the number of queries needed to approximate the rank of $\mat{M} \in \mathbb{R}^{n \times n}$ up to a factor $t$ by randomized, possibly adaptive algorithms; we show that $\frac{n}{t}+1$ queries are necessary and sufficient. A natural alternative way to prove this would be to give part of the matrix to Alice, part of to Bob, and have the players exchange the $\mat{M}^L\vec{v}^i$ and $\mat{M}^R \vec{v}^i$, where $\mat{M} = \mat{M}^L + \mat{M}^R$ and $\mat{M}^L$ is Alice's part and $\mat{M}^R$ is Bob's part. Then, if the $2$-player randomized communication complexity of approximating the rank of $\mat{M}$ up to a factor of $t$ were known to be $\Omega(n^2/t)$, we would obtain a nearly-matching query lower bound of $\Omega(n/(t (b + \log n)))$, where $b$ is the number of bits needed to specify the entries of $\mat{M}$ and the queries. However, it is unknown what the $2$-player communication complexity of approximating the rank of $\mat{M}$ up to a factor $t$ is over $\mathbb{R}$! We are not aware of any lower bound better than $\Omega(1)$ for constant $t$ for this problem for adaptive queries. We note that for non-adaptive queries, there is an $\Omega(n^2)$ sketching lower bound over the reals given in \cite{lnw14b},
and an $\Omega(n^2/\log p)$ lower bound for finite fields (of size $p$) in \cite{akl17}. There is also a property testing lower bound in \cite{blwz19}, though such a lower bound makes additional assumptions on the input. Thus, our model gives a new lens to study this problem from, from which we are able to derive strong lower bounds for adaptive queries. Our techniques could be helpful for proving
lower bounds in existing models, such as two-party communication complexity.

Our model is also related to linear decision tree complexity, see, e.g., \cite{b92,klm18}, though such lower bounds typically involve just seeing a threshold applied to $\mat{M}\vec{v}^i$, and typically $\mat{M}$ is a vector. In our case, we observe the entire output vector $\mat{M}\vec{v}^i$.

An interesting twist in our model is that in our formulation above, we only allowed to query $\mat{M}$ via matrix-vector products {\it on the right}, i.e., of the form $\mat{M} \cdot \vec{v}^i$. One could ask if there are natural properties $\mathcal{P}$ of $\mat{M}$ for which the number $q_L$ of queries one would need to make if querying $\mat{M}$ via queries of the form $(\vec{u}^1)^T\mat{M}, (\vec{u}^2)^T\mat{M}, \ldots, (\vec{u}^{q_L})^T\mat{M}$ can be significantly smaller than the number $q_R$ of queries one would need to make if querying $\mat{M}$ via queries of the form $\mat{M}\vec{u}^1, \mat{M}\vec{u}^2, \ldots, \mat{M}\vec{u}^{q_R}$:
\begin{center}
{\bf Question 2:} Are there natural problems for which $q_L \ll q_R$?
\end{center}
We show that this is in fact the case, namely, if we can only multiply on the right, then it takes $\Omega(n/\log n)$ queries to determine if there is a {\it column} of a matrix $\mat{M} \in \zo^{n \times n}$ which is all $1$s. However, if we can multiply on the left, then the single query $(1, 1, \ldots, 1)$ can determine this.

We study a few problems around Question 2, which is motivated from several perspectives. First, matrices might be stored on computers in a specific encoding, e.g., a sparse row format, from which it may be much easier to multiply on the right than on the left. Also, in compressed sensing, it may be natural for physical reasons to obtain linear combinations of columns rather than rows.

Another important question is how the query complexity depends
on the {\it underlying field} for which matrix-vector products are performed.
Might it be that for a natural problem the query complexity if the
matrix-vector products are performed modulo $2$ is much higher than
if the matrix-vector products are performed over the reals?

\begin{center}
{\bf Question 3:} Is there a natural problem for which the query complexity
in our model over $\F[2]$ is much larger than that over the reals?
\end{center}

Yet another important application of this model is to querying graphs. A natural
question is which representation to use for the graph. For example,
a natural representation of a graph on $n$ vertices is through its adjacency
matrix $\mat{A} \in \{0,1\}^{n \times n}$, where $\mat{A}_{i,j} = 1$ if and only if $\{i,j\}$
occurs as an edge. A natural representation for a bipartite graph with $n$
vertices in each part could be an $n \times n$ matrix $\mat{A}$ where $\mat{A}_{i,j} = 1$
iff there is an edge from the $i$-th left vertex to the $j$-th right vertex.
Yet another representation could be the $\binom{n}{2}\times n$
edge-vertex incidence matrix, where the $\{i,j\}$-th row is either
$0$, or has exactly two ones, one in location $i$ and one in location $j$.
One often considers a signed edge-vertex incidence matrix, where one first
arbitrarily fixes an ordering on the vertices and then the
$\{i,j\}$-th entry has a $1$ in the $i$-th position and a $-1$ in the
$j$-th position if $i > j$, otherwise positions $i$ and $j$ are swapped.
Yet another possible representation of a graph is through its Laplacian.

\begin{center}
{\bf Question 4:} Do some natural representations of graphs admit much
more efficient query algorithms for certain problems than other natural representations?
\end{center}
We note that in the data stream model, where one sees a long sequence
of insertions and deletions to the edges of a graph, each of the matrix
representations above can be simulated and so they lead to the same
complexity. We will show, perhaps surprisingly, that in this model
there can be an exponential difference in the query complexity for
two different natural representations of a graph for the same problem.

We next get into the details of our results.
We would like to stress
that even basic problems in this model are not immediately obvious how
to tackle. As a puzzle for the reader, what is the query complexity of
determining if a matrix $\mat{M} \in \mathbb{F}^{n \times n}$ is symmetric
if one can only query vectors on the right? We will answer this later in
the paper.

\subsection{Formal Model and Our Results}
We now describe our model and results formally in terms of an oracle.
The oracle has a matrix $\mat{M} \in \mathbb{F}^{m \times n}$, for some underlying
field $\mathbb{F}$ that we specify in each application. We can only query this
matrix via matrix-vector products, i.e., we pick an arbitrary vector $\vec{x}$
and send it to the oracle, and the oracle will respond with a vector $\vec{y}=\mat{M}\cdot\vec{x}$.
We focus our attention when the queries only occur on the right.
Our goal is to approximate or test a number of properties of $\mat{M}$ with a minimal number of queries, i.e.,
to answer Question 1 for a large number of different application areas.

We study a number of problems as summarized in the Table 1.
% in linear algebra, such as
% approximating the rank of a matrix $\mat{M}$,
% approximating the trace of $\mat{M}$,
% approximating the maximum eigenvalue of $\mat{M}$,
% %    \REMARK{Is this done by sampling random $\vec{v}_j$ and compute $\mat{M}%\vec{v}_j$?}
% testing whether $\mat{M}$ is symmetric, testing whether $\mat{M}$ is diagonal,
% testing whether $\mat{M}$ is unitary, and testing whether $\mat{M}$ is low rank.
We assume $\mat{M}$ is an $m\times n$ matrix and $\eps > 0$ is a parameter of the problem.
The bounds hold for constant probability algorithms.
In some problems, such as testing
whether the matrix is a diagonal matrix, we always assume $m=n$,
and in the graph testing problems we explicitly describe
how the graph is represented using $\mat{M}$.
Interestingly, we are able to prove very strong lower bounds for approximating the rank,
which as described above, are unknown to hold for randomized communication complexity.

Motivated be streaming and statistics questions,
we next study the query complexity of approximating the norm of each row of $\mat{M}$.
% what the approximate norm of each row of $\mat{M}$ is, over the reals,
% up to a factor of $\sqrt{2}$.
We also study
the computation of the majority or parity of each column or row of $\mat{M}$,
the AND/OR of each column or row of $\mat{M}$, or equivalently,
whether $\mat{M}$ has an all ones column or row,
whether $\mat{M}$ has two identical columns or rows, and whether
$\mat{M}$ contains an unusually large-normed row, i.e., a ``heavy hitter''.
Here we show there are natural problems, such as computing the parity of all columns, which can be solved with $1$ query if sketching on the left, but require $\bigomega{n}$ queries if sketching on the right, thus answering Question 2.
We also answer Question 3, observing for the natural problem of testing if a row
is all ones, a single deterministic
query suffices over the reals but over $\F[2]$ this deterministically requires $\Omega(n)$
queries.

\begin{table}
\caption{Our Results}
\centering
\small
\begin{tabular}{|l|l|}
%\multicolumn{2}{c}{Table 1: Our Results}\\
\hline
  Problem & Query Complexity \\
  \hline
  \multicolumn{2}{|c|}{Linear Algebra Problems}\\
  \hline
  Approximate Rank (for any $p'>p$ & $p+1$ (Section~\ref{sec:rank}) \\
   distinguishing Rank$\leq p$ from Rank $p'$) & \\
  \hline
  Trace Estimation  & $\bigomega{{n}/{\log n}}$ (Section~\ref{sec:tra})\\
  \hline
  Symmetric Matrix / Diagonal Matrix & $\bigo{1}$
  (Section~\ref{sec:sym} and ~\ref{sec:diag})\\
  % \hline
  % Diagonal matrix & $\bigo{\log({1}/{\eps})}$
  % (\cref{sec:diag})\\
  \hline
  Unitary Matrix & 1 (Section~\ref{sec:unitary})\\
  \hline
  Approximate Maximum Eigenvalue & $\Theta(\eps^{-0.5}{\log n})$ for adaptive queries, \\
  & $\bigtheta{n}$ for non-adaptive queries (Section~\ref{sec:maxeigen}) \\
  \hline
  \multicolumn{2}{|c|}{Streaming and Statistics Problems}\\
  \hline
  All Ones Column & $\bigtheta{n}$ over $\F[2]$, \\
  & $\bigomega{{n}/{\log n}}$ over $\R$
   (Section~\ref{sec:all1}) \\
  \hline
  Two Identical Columns & $\Omega(n/m)\ (m=\Omega(\log (n/\epsilon)))$ \\
  Two Identical Rows & $\bigo{\log m}$ (Section~\ref{sec:identi})\\
  \hline
  Approximate Row Norms / Heavy Hitters & $\bigo{{\eps^{-2}}{\log m}}$  (Section~\ref{sec:norms})\\
  \hline
  Majority of Columns  & $\Omega({n/\log n})$ over $\R$ \\
  Majority of Rows  & $\bigo{1}$ over $\R$ (Section~\ref{sec:majority})\\
  \hline
  Parity of Columns  & $\bigtheta{n}$  \\
  Parity of Rows & $\bigo{1}$ (Section~\ref{sec:parity})\\
  \hline
  \multicolumn{2}{|c|}{Graph Problems}\\
  \hline
  Connectivity given Bipartite Adjacency Matrix & $\bigomega{n/\log n}$ (Section~\ref{sec:connect})\\
  Connectivity given Signed Edge-Vertex Matrix& $\bigo{\polylog{n}}$ (\cite{KLMMS17},
  noted in Section~\ref{sec:connect})\\
%  \hline
%  Bipartitenss & $\bigtheta{\log^2 n}$ for non-adaptive queries (Section~\ref{sec:bipartite})\\
  \hline
  Triangle Detection & $\bigomega{{n}/{\log n}}$
  (Section~\ref{sec:triangle})\\
  \hline
\end{tabular}
\end{table}

For graph problems, we first argue if the graph is presented as an $n \times n$
bipartite adjacency matrix $\mat{M}$, then it requires $\Omega(n/\log n)$ possibly adaptive queries
to determine if the graph is connected. In contrast, if the graph is presented as
an $n \times \binom{n}{2}$ signed vertex-edge incidence matrix, then $\polylog{n}$ non-adaptive queries
suffices. This answers Question 4, showing that the type of representation of the graph is critical in this model.
Motivated by a large body
of
recent work on triangle counting (see, e.g., \cite{e17} and the references
therein), we also give strong negative results
for this problem in our model, which as with all of our lower bounds
unless explicitly stated otherwise,
hold even for algorithms which perform adaptive queries.

% !TEX root = ./MatrixTesting.tex

\section{Preliminaries}
% In this section, we first define necessary notation. Then we introduce two distance measures between probability distributions, and introduce the Wishart distribution.

We use capital bold letters, e.g., $\mat{A,B,M}$, to denote matrices, and use lowercase bold letters, e.g., $\vec{x,y}$, to denote column vectors.
Sometimes we write a matrix as a list of column vectors in square brackets, e.g., $\mat{M} = [\vec{m}_1,\dots,\vec{m}_n]$.
We use calligraphic letters, e.g., $\D$, to denote probability distributions,
and use $\mat{M}\gets \D$ to denote that $\mat{M}$ is sampled from distribution $\D$.
In particular, we use $\G$ to denote a Gaussian distribution and $\mat{G}$ for a matrix whose entries are sampled from an independently and identically distributed (denoted as i.i.d. in the following) Gaussian distribution.

We call a matrix $\mat{M}$ i.i.d. Gaussian if each element is i.i.d. Gaussian. It is easy to check that if matrix $\mat{G}$ is a $p\times n$ i.i.d. Gaussian matrix, and $\mat{R}$ is an $n\times n$ rotation matrix, then $\mat{G}\times \mat{R}$ is still i.i.d. Gaussian, and has the same probability distribution of $\mat{G}$.

% \subsection{Total Variation Distance}% and Kullback-Leibler Divergence}
% Both \emph{total variation distance} and \emph{Kullback-Leibler divergence} measure the distance between two probability distributions.
The \emph{total variation distance}, sometimes called the statistical distance, between two probability measures $P$ and $Q$ is defined as %$\dtv{P,Q} \eqdef$ $ \sup_A\abs{P(A)-Q(A)}.$
 \[
 \dtv{P,Q} \eqdef \sup_A\abs{P(A)-Q(A)}.
 \]

% The Kullback-Leibler divergence from $Q$ to $P$ is defined to be
% $$\dkl{P||Q} = \int_{-\infty}^{\infty} p(i)\log\frac{p(x)}{q(x)} dx .$$
% These two measures are connected via the following lemma:

% \begin{lemma}[Pinsker's inequality]
% 	$$\dtv{P,Q}\le \sqrt{\frac{1}{2} \dkl{P||Q}} .$$
% \end{lemma}

% \subsection{Wishart distribution}
Let $\mat{X}$ be an $n\times m$ matrix with each row i.i.d. drawn from an $m$-variate normal distribution $N(0,\Sigma)$.
Then the distribution of the $m\times m$ random matrix $\mat{A}=\mat{X}^T\mat{X}$ is called the \emph{Wishart distribution} with $n$ degrees of freedom and covariance matrix $\Sigma$,
 denoted by $W_m(n,\Sigma)$.
 % where the subscript $m$ indicates the size of $\mat{A}$.
%
The distribution of eigenvalues of $\mat{A}$ is characterized in the following lemma.
\begin{lemma}[Corollary 3.2.19 in \cite{kent1984aspects}]\label{lem: wishart}
	If $\mat{A}$ is $W_m(n,\lambda I_m)$, with $n>m-1$, the joint density function of the eigenvalues $\mat{\Lambda}=(\lambda_1,\dots,\lambda_m)$ of $\mat{A}$ (in descending order) is
	\begin{align*}
		f(\mat{\Lambda}) = \frac{\pi^{m^2/2}}{(2\lambda)^{mn/2}\Gamma_m(m/2)\Gamma_n(n/2)} \exp\left(-\frac{1}{2\lambda}\sum_{i=1}^m \lambda_i \right)\prod_{i=1}^m \lambda_i^{(n-m-1)/2} \prod_{1\le i<j\le m} (\lambda_i-\lambda_j)
	\end{align*}
	In particular, for $\lambda=1$ and $n=m$, $\exists$ a constant  $Z_m$  independent from $\lambda_1,\dots,\lambda_m$, such that
	\begin{align*}
		f(\mat{\Lambda}) = \frac{1}{Z_m} \exp\left(-\frac{1}{2}\sum_{i=1}^m \lambda_i \right)\prod_{i=1}^m \lambda_i^{-1/2} \prod_{1\le i<j\le m} (\lambda_i-\lambda_j)
	\end{align*}
	% where $Z_m$ is a normalization constant independent from $\lambda_1,\dots,\lambda_m$.
\end{lemma}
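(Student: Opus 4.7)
The plan is to derive the joint density of the eigenvalues by combining three ingredients: the density of $\mat{A}$ on the cone of positive definite symmetric matrices, the change of variables given by the spectral decomposition, and integration over the orthogonal group. Since the statement is quoted as a known corollary, the goal is to recover it from first principles rather than introduce new machinery.

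First I would recall (or derive) the density of $\mat{A} = \mat{X}^T\mat{X}$ on the cone $S_m^{++}$ of positive definite symmetric matrices. Starting from the fact that the rows of $\mat{X}$ are i.i.d.\ $N(0,\lambda I_m)$, a direct computation using the QR-type decomposition of $\mat{X}$ (or Bartlett's decomposition) yields
\[
f_{\mat{A}}(A) \;=\; \frac{1}{2^{mn/2}\lambda^{mn/2}\,\Gamma_m(n/2)}\,|A|^{(n-m-1)/2}\,\exp\!\left(-\tfrac{1}{2\lambda}\,\mathrm{tr}(A)\right),
\]
valid for $n > m-1$, with $\Gamma_m$ the multivariate gamma function. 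This handles the exponential factor and the $\prod_i \lambda_i^{(n-m-1)/2}$ factor in the target formula, because $\mathrm{tr}(A)=\sum_i \lambda_i$ and $|A|=\prod_i \lambda_i$.

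Next I would change variables via the spectral decomposition $\mat{A} = \mat{Q}\,\mathrm{diag}(\lambda_1,\dots,\lambda_m)\,\mat{Q}^T$ with $\lambda_1 > \cdots > \lambda_m > 0$ and $\mat{Q}\in O(m)$ (with some convention to fix signs of columns so the decomposition is a bijection almost everywhere). The classical Jacobian of this map, computed by differentiating the identity $d\mat{A} = d\mat{Q}\,\mat{\Lambda}\mat{Q}^T + \mat{Q}\,d\mat{\Lambda}\,\mat{Q}^T + \mat{Q}\mat{\Lambda}\,d\mat{Q}^T$ and reading off the independent entries, equals $\prod_{i<j}(\lambda_i-\lambda_j)$ times the Haar volume element on $O(m)$. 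Substituting this into $f_{\mat{A}}$ and integrating $\mat{Q}$ out against Haar measure on $O(m)$ contributes only a dimension-dependent constant (namely the volume of $O(m)$ modulo the sign ambiguities), independent of $\mat{\Lambda}$. Collecting all $\mat{\Lambda}$-dependent factors yields exactly
\[
f(\mat{\Lambda}) \;\propto\; \exp\!\left(-\tfrac{1}{2\lambda}\sum_i \lambda_i\right)\prod_i \lambda_i^{(n-m-1)/2}\prod_{i<j}(\lambda_i-\lambda_j),
\]
and absorbing everything else into the stated normalization constant gives the first display. The specialization to $\lambda=1$, $n=m$ is immediate: the exponent $(n-m-1)/2$ becomes $-1/2$, and all $\lambda$- and $n$-dependent constants collapse into the single constant $Z_m$ depending only on $m$.

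The main obstacle, and the only genuinely delicate step, is the Jacobian computation for the spectral change of variables together with bookkeeping for the redundancy in $\mat{Q}$ (the stabilizer from permutations of equal eigenvalues and sign flips of eigenvectors). I would handle this by restricting to the open dense set where the eigenvalues are distinct and by fixing the sign convention on $\mat{Q}$, which makes the map a diffeomorphism onto its image and lets all ambiguities be folded into the normalizing constant. Everything else (the form of the Wishart density, the Haar integration giving a $\mat{\Lambda}$-independent factor) is routine once this Jacobian is in hand.
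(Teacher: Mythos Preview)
The paper does not prove this lemma at all; it is stated as a citation from \cite{kent1984aspects} and used as a black box in the proof of Theorem~\ref{thm:rank}. Your proposal outlines the standard textbook derivation (Wishart density on the positive-definite cone, spectral change of variables with Vandermonde Jacobian, Haar integration over $O(m)$), which is correct and is essentially how the cited reference obtains the result, so there is nothing to compare against within the paper itself.
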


% !TEX root = ./MatrixTesting.tex

\section{Linear Algebra Problems}
In this section we present our lower bound for rank approximation in Section~\ref{sec:rank}. In the following, we provide our results about trace estimation in Section~\ref{sec:tra}, testing symmetric matrices in Section~\ref{sec:sym}, testing diagonal matrices in Section~\ref{sec:diag}, testing unitary matrices in Section~\ref{sec:unitary}, and approximating the maximum eigenvalue in Section~\ref{sec:maxeigen}.

\subsection{Lower Bound for Rank Approximation}\label{sec:rank}
In this section, we discuss how to approximate the rank of a given matrix $\mat{M}$ over the reals when the queries consist of right multiplication by vectors. A na\"ive algorithm to learn the rank is to pick random Gaussian query vectors non-adaptively.
In order to approximate the rank, that is, to distinguish whether $\rank{\mat{M}}\le p$ or $\rank{\mat{M}} \ge p+1$, this algorithm needs at least $p+1$ queries, and it is not hard to see that the algorithm succeeds with probability $1$.
Indeed, if $\mat{H} \in \mathbb{R}^{n \times (p+1)}$ is the random Gaussian query matrix, and $\mat{M}$ the unknown $n\times n$ matrix, then we can write $\mat{M}$ in its thin singular value decomposition as $\mat{M} = \mat{U} \mat{\Sigma} \mat{V}^T$, where $\mat{U}$ and $\mat{V}\in \mathbb{R}^{n\times k}$ have orthonormal columns, and $\mat{\Sigma} \in \mathbb{R}^{k\times k}$ has positive diagonal entries. Here $k=$rank$(M)$. We have that rank$(\mat{M} \cdot \mat{H}) = $ rank$(\mat{V}^T \mat{H})$, which by rotational invariance of the Gaussian distribution is the the same as the rank of a random Gaussian matrix, which will be the minimum of $p+1$ and the rank of $\mat{M}$ with probability $1$.

In the following, we will show that we cannot expect anything better. We will first show for non-adaptive queries, at least $p+1$ queries are necessary to learn the approximate rank. Then we generalize our results to adaptive queries. Our results hold for randomized algorithms by applying Yao's minimax principle.

\subsubsection{Non-Adaptive Query Protocols}

\begin{theorem}\label{thm:rank}
	Let constant $\eps>0$ be the error tolerance and let $\mat{M}$ be an $n\times n$ oracle matrix and suppose to start that we make non-adaptive queries.
	% For sufficiently large $n$ and an $n\times n$ oracle matrix $\mat{M}$ with non-adaptive queries,
	For integer $p < p'\le n$, at least $p+1$ queries are necessary to distinguish $\rank{\mat{M}}\le p$ from $\rank{\mat{M}} \ge p'$ with advantage $\ge \eps$.
\end{theorem}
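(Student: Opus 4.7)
My plan is to apply Yao's minimax principle by exhibiting two hard input distributions, one supported on matrices of rank at most $p$ and the other on matrices of rank at least $p'$, whose induced response distributions have total variation distance below $\eps$ for every deterministic non-adaptive query matrix $\mat{V}\in\R^{n\times p}$. Since any full-rank query matrix can be orthonormalized in post-processing without losing information, I assume without loss of generality that $\mat{V}$ has orthonormal columns.

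I define $\D_0$ to be the law of $\mat{M}_0 = \mat{G}_0\mat{G}_0^T$, where $\mat{G}_0\in\R^{n\times p}$ has i.i.d.\ standard Gaussian entries, so that $\rank{\mat{M}_0}=p$ almost surely. I define $\D_1$ to be the law of $\mat{M}_1 = \mat{G}_0\mat{G}_0^T + \alpha\,\mat{H}\mat{H}^T$ where $\mat{H}\in\R^{n\times(p'-p)}$ is an independent i.i.d.\ standard Gaussian matrix and $\alpha>0$ is a small parameter to be chosen in terms of $\eps$; the column spans of $\mat{G}_0$ and $\mat{H}$ are in general position almost surely, so $\rank{\mat{M}_1}=p'$ almost surely. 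Both $\D_0$ and $\D_1$ are orthogonally invariant (i.e.\ $\mat{M}\overset{d}{=}\mat{R}\mat{M}\mat{R}^T$ for any orthogonal $\mat{R}$), so together with the orthonormality of $\mat{V}$ I can reduce to the canonical case $\mat{V}=[\mat{e}_1,\dots,\mat{e}_p]$: any other orthonormal $\mat{V}$ differs from this one by a fixed orthogonal transformation that the algorithm can undo on the response. In the reduced form the response is simply the first $p$ columns of $\mat{M}$.

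The crucial identity is
\[
\mat{M}_1\mat{V} \;=\; \mat{M}_0\mat{V} \;+\; \alpha\,\mat{H}\mat{H}^T\mat{V},
\]
in which the two summands are independent and both live on the affine subspace of $\R^{n\times p}$ whose top $p\times p$ block is symmetric. I would then show that $\alpha\,\mat{H}\mat{H}^T\mat{V}$ concentrates at the origin as $\alpha\to 0^+$, so by the standard convolution inequality
\[
\dtv{\mat{M}_0\mat{V},\mat{M}_1\mat{V}} = \tfrac{1}{2}\|f - f\ast g_\alpha\|_1 \longrightarrow 0,
\]
where $f$ denotes the density of $\mat{M}_0\mat{V}$ on the symmetric-top-block subspace and $g_\alpha$ that of the perturbation. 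Choosing $\alpha$ small enough as a function of $\eps,n,p,p'$ gives $\dtv{\mat{M}_0\mat{V},\mat{M}_1\mat{V}}\le\eps$, and Yao's principle then rules out randomized algorithms achieving advantage $\ge\eps$ with only $p$ non-adaptive queries.

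The main technical obstacle is the convolution-convergence step. One must verify that $f$ is a sufficiently regular $L^1$ density on the submanifold, which can be done by a Jacobian computation for the map $(\mat{A},\mat{B})\mapsto(\mat{A}\mat{A}^T,\mat{B}\mat{A}^T)$ where $\mat{A}$ and $\mat{B}$ are the top $p$ and bottom $n-p$ rows of $\mat{G}_0$; and one must control the tails of $\mat{H}\mat{H}^T\mat{V}$ so that $g_\alpha$ acts as an $L^1$ approximate identity, which follows from the Wishart tail estimates implicit in Lemma~\ref{lem: wishart}. An alternative coupling-based route couples $\mat{M}_0$ and $\mat{M}_1$ through the same $\mat{G}_0$, uses Gaussian concentration to ensure $\|\alpha\mat{H}\mat{H}^T\mat{V}\|_F\le\delta$ with high probability, and bounds the remaining total variation by a Lipschitz modulus of $f$; in either approach the heart of the matter is verifying that the $\D_0$-response is sufficiently spread out on its submanifold so that a small independent perturbation is indistinguishable.
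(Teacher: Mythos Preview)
Your proposal is sound and would yield a valid proof, but it proceeds along a genuinely different route from the paper. The paper's hard instances are the \emph{asymmetric} matrices $\mat{M}_1=\mat{U}\mat{G}^T$ and $\mat{M}_2=\mat{U}\mat{G}^T+\frac{1}{Z(n)}\mat{U}^{\bot}\mat{H}^T$, where $[\mat{U},\mat{U}^{\bot}]$ is a random orthonormal basis. It then proves a reduction (Lemma~\ref{lemma:Dtv}) showing that the total variation distance between the response distributions equals that between the Gram matrices $(\mat{M}_i\mat{V})^T\mat{M}_i\mat{V}$, which are (perturbed) Wishart. From there the paper works directly with the explicit joint eigenvalue density of the Wishart ensemble (Lemma~\ref{lem: wishart}), establishes a quantitative eigenvalue-gap event (Lemma~\ref{lem:lower bound of event}), controls the perturbation of each eigenvalue via Weyl's inequality, and finishes with an explicit density-ratio computation.

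Your approach instead uses \emph{symmetric PSD} hard instances $\mat{G}_0\mat{G}_0^T$ and $\mat{G}_0\mat{G}_0^T+\alpha\,\mat{H}\mat{H}^T$, exploits bi-orthogonal invariance to reduce directly to a canonical $\mat{V}$, and then invokes the abstract approximate-identity fact $\|f-f\ast\mu_\alpha\|_{L^1}\to 0$ for any $f\in L^1$ whenever $\mu_\alpha\Rightarrow\delta_0$. This is more conceptual and sidesteps all of the eigenvalue bookkeeping; the price is that you must still verify that the pushforward of the Gaussian under $(\mat{A},\mat{B})\mapsto(\mat{A}\mat{A}^T,\mat{B}\mat{A}^T)$ is absolutely continuous on the symmetric-top-block subspace (your sketched Jacobian check is correct, since the map is a submersion wherever $\mat{A}$ is invertible). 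One minor imprecision: when $p'-p<p$ the perturbation $\alpha\,\mat{H}\mat{H}^T\mat{V}$ does not itself have a density on that subspace, so ``$g_\alpha$'' should be read as a probability \emph{measure} rather than a density---the $L^1$ convergence $f\ast\mu_\alpha\to f$ still goes through unchanged. The paper's route, by contrast, is fully quantitative (it produces an explicit $Z(n)=2^{\Theta(p^2\log n)}$), which is not needed for the theorem as stated but is leveraged later when extending the lower bound to adaptive queries.
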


\begin{proof}
	Given any algorithm distinguishing $\rank{\mat{M}}\le p$ from $\rank{\mat{M}}\ge p'$ for some $p'<n$,
	we can determine whether a $p'\times p'$ matrix $\mat{M}'$ has full rank $p'$ or $\rank{\mat{M}'}\le p$, by padding $\mat{M}'$ to an $n\times n$ matrix $\mat{M}$.
	Therefore in what follows it suffices to prove the lower bound for two $n\times n$ matrices $\mat{M}_1$ and $\mat{M}_2$ where $\rank{\mat{M}_1}\le p$ and $\rank{\mat{M}_2}=n$:
	\begin{enumerate}
		\item $\mat{M}_1=\mat{U}\times \mat{G}^T$;

		\item $\mat{M}_2=\mat{U}\times \mat{G}^T + \frac{1}{Z(n)}\cdot \mat{U}^{\bot} \times \mat{H}^T$.
	\end{enumerate}
	Here $\mat{U}$ has $p$ columns and $\mat{U}^{\bot}$  has $(n-p)$ columns such that $\left[\mat{U},\mat{U}^{\bot}\right]$ forms an $n\times n$ random orthonormal basis,
	$\mat{G}^T$ and $\mat{H}^T$ are $p\times n$ and $(n-p)\times n$ matrices whose entries are sampled i.i.d. from the standard Gaussian distribution,
	and $Z(n)$ is a function in $n$ which will be specified later.
	It immediately follows that $\rank{\mat{M}_1}\le p$ and $\rank{\mat{M}_2}=n$ with overwhelmingly high probability.
	Then we assume $\rank{\mat{M}_2}=n$  and discuss the query lower bound for distinguishing $\mat{M}_1$ from $\mat{M}_2$.

	Given  $\mat{M}\in\set{\mat{M}_1,\mat{M}_2}$, without loss of generality we denote the $q$ non-adaptive queries with an $n\times q$ orthonormal\footnote{Non-orthonormal queries can be made orthonormal using a change of basis in post-processing.} matrix $\mat{V}=[\vec{v}_1,\dots,\vec{v}_q]$, where $q\le p$ and each $n\times 1$ column vector $\vec{v}_i$ is a query to the oracle of matrix $\mat{M}$ which gets response $\mat{M}\cdot\vec{v}_i$, for $i\in[q]$.
	Then, it suffices to show that the following two distributions are hard to distinguish:
	\begin{enumerate}
		\item $\mat{M}_1\times \mat{V} \equiv \mat{U} \mat{W}$, where $\mat{W}=\mat{G}^T\mat{V}$;

		\item $\mat{M}_2\times \mat{V} \equiv \mat{U} \mat{W}+ \frac{1}{Z(n)}\cdot \mat{U}^{\bot} \mat{W}'$, where $\mat{W}'=\mat{H}^T\mat{V}$.
	\end{enumerate}

	Note that $\left[\mat{U},\mat{U}^{\bot}\right]$ is orthonormal, and hence $\mat{U}^T \mat{U}=\mat{I}_p, \left(\mat{U}^{\bot}\right)^T \mat{U}^{\bot}=\mat{I}_{n-p}$, $\mat{U}^T \mat{U}^{\bot}=\mat{0}_{p\times(n-p)}$.
	We introduce Lemma~\ref{lemma:Dtv} to eliminate $\mat{U},\mat{U}^{\bot}$ in the representation of $\mat{M}\times\mat{V}$.

	\begin{lemma}\label{lemma:Dtv}
		For $\mat{M}_1,\mat{M}_2$ and $\mat{V}$ defined as above,
		\[\dtv{\mat{M}_1\mat{V},\mat{M}_2\mat{V}} = \dtv{\left(\mat{M}_1\mat{V}\right)^T\mat{M}_1\mat{V}, \left(\mat{M}_2\mat{V}\right)^T\mat{M}_2\mat{V}}\]
	\end{lemma}

	\begin{proof}
		The direction $\dtv{\mat{M}_1\mat{V},\mat{M}_2\mat{V}} \ge \dtv{\left(\mat{M}_1\mat{V}\right)^T\mat{M}_1\mat{V}, \left(\mat{M}_2\mat{V}\right)^T\mat{M}_2\mat{V}}$ is trivial by the data processing inequality (i.e., for every $\mat{X,Y}$ and function $f$, $\dtv{\mat{X},\mat{Y}}\ge \dtv{f(\mat{X}),f(\mat{Y})}$).
		In what follows we only prove the other direction.
		 % that $\dtv{\mat{M}_1,\mat{M}_2}\le \dtv{\mat{M}_1^T\mat{M}_1, \mat{M}_2^T\mat{M}_2}$.

		First we notice that for every fixed $n\times n$ orthonormal matrix $\mat{R}$ and for a random matrix $\mat{M}$ sampled as $\mat{M}_1$ or $\mat{M}_2$, the product $\mat{N}\eqdef \mat{R}\mat{M}$ follows exactly the same distribution of $\mat{M}$.
		Thus $\mat{N}\mat{V}$ and $\mat{M}\mat{V}$ are identically distributed.

		Then, from a random sample $\mat{V}^T\mat{M}^T\mat{M}\mat{V}$ we can find $\mat{M}'$ such that $\mat{V}^T\mat{M}^T\mat{M}\mat{V} = (\mat{M}')^T\mat{M}'$ and $\mat{M}' = \mat{S}\mat{M}\mat{V}$ for some orthonormal matrix $\mat{S}$ and orthonormal query matrix $\mat{V}$.
		Although $\mat{M}'$ is not necessarily the same as $\mat{M}\mat{V}$ because of $\mat{S}$, we have $\mat{R}\mat{M}' \sim \mat{N}\mat{V} \sim \mat{M}\mat{V}$ for a uniformly random orthonormal matrix $\mat{R}$.
		Thus we transform a random sample from  $\mat{V}^T\mat{M}^T\mat{M}\mat{V}$ into a sample from $\mat{M}\mat{V}$ via $\mat{R}\mat{M}'$, and hence, we have
		$\dtv{\mat{M}_1\mat{V},\mat{M}_2\mat{V}} \le \dtv{\left(\mat{M}_1\mat{V}\right)^T\mat{M}_1\mat{V}, \left(\mat{M}_2\mat{V}\right)^T\mat{M}_2\mat{V}}$.
	\end{proof}

	Using Lemma~\ref{lemma:Dtv}, it suffices to prove an upper bound for $\dtv{ \mat{\Lambda}, \mat{\Lambda}' }$ as follows:
	\begin{align*}
	&	\dtv{\mat{U} \mat{W}, \mat{U} \mat{W}+ \frac{\mat{U}^{\bot} \mat{W}'}{Z(n)} }
	 \\
=& \dtv{ \left(\mat{U} \mat{W}\right)^T(\mat{U} \mat{W}), \left(\mat{U} \mat{W}+ \frac{\mat{U}^{\bot} \mat{W}'}{Z(n)}\right)^T(\mat{U} \mat{W}+ \frac{\mat{U}^{\bot} \mat{W}'}{Z(n)})  }\\
		=& \dtv{ \mat{W}^T\mat{W}, \mat{W}^T\mat{W} + \frac{(\mat{W}')^T\mat{W}'}{Z^2(n)} }
		\le \dtv{ \mat{\Lambda}, \mat{\Lambda}' }
	\end{align*}
	where $\mat{\Lambda} =\diag(\lambda_1,\dots,\lambda_q), \mat{\Lambda}' =\diag(\lambda'_1,\dots,\lambda'_q)$ are diagonal matrices such that $\mat{W}^T\mat{W} = \mat{A}^T\mat{\Lambda}\mat{A} $ and  $\mat{W}^T\mat{W}+ \frac{(\mat{W}')^T\mat{W}'}{Z^2(n)} = \mat{B}^T\mat{\Lambda}'\mat{B}$ for orthonormal matrices $\mat{A}$ and $\mat{B}$.
	The inequality follows because any algorithm separating $\mat{W}^T\mat{W}$ from $\mat{W}^T\mat{W}+\frac{(\mat{W}')^T\mat{W}'}{Z^2(n)}$ implies a separation of $\mat{\Lambda}$ from $\mat{\Lambda}'$ with the same advantage, by multiplying by random orthonormal matrices.

	By Weyl's inequality \cite{Weyl,Weyl2}, for every $i\in [q]$,
	$\lambda'_i \in \left[\lambda_i- \|\mat{\Lambda}'-\mat{\Lambda}\|_2, \lambda_i + \|\mat{\Lambda}'-\mat{\Lambda}\|_2\right]$, and hence $\lambda'_i \in \left[\lambda_i- \bigo{\frac{ \|\mat{W'}\|_2 }{Z^2(n)}}, \lambda_i + \bigo{\frac{\|\mat{W'}\|_2}{Z^2(n)}} \right]$.
	Notice that $\mat{W'}$ is an $(n-p)\times q$ i.i.d. Gaussian matrix, and hence  $\|\mat{W'}\|^2_2$ is a chi-squared variable with $(n-p)q$ degrees of freedom, which is bounded by $\bigo{(n-p)q}$ with high probability
	(c.f. Example 2.12 in \cite{Wainwright}).
	Recalling that $q\le p$, in what follows we condition on the event $\lambda'_i \in \left[\lambda_i- \bigo{\frac{ np }{Z^2(n)}}, \lambda_i + \bigo{\frac{ np }{Z^2(n)}} \right]$.

	We then show the gaps between eigenvalues $\lambda_i$ are sufficiently large. Note that since $\mat{G}^T$ is  i.i.d. Gaussian and $\mat{V}$ is an orthonormal matrix, each row in $\mat{W}=\mat{G}^T\mat{V}$  is independently drawn from an $q$-variate normal distribution, thus the probability distribution of  $\mat{W}^T\mat{W}$ is a Wishart distribution $W_q(p,I_q)$.
	Let $q=p$ and $\lambda_1,\dots,\lambda_p$ be sorted in descending order. Then by Lemma~\ref{lem: wishart} the density function of $\mat{\Lambda}$ is:
	\begin{align}\label{equ:density}
		f(\mat{\Lambda}) = \frac{1}{Z_p}\exp\left(-\frac{1}{2}\sum_{i=1}^p \lambda_i \right)\prod_{i=1}^p \lambda_i^{-1/2} \prod_{1\le i<j\le p} (\lambda_i-\lambda_j)
	\end{align}

	Let $\event$ denote the event that $\lambda_p\ge\frac{0.01}{\sqrt{n}}$ and $\forall 1\le i<j\le p, \lambda_i-\lambda_j \ge \gamma = 2^{-\bigtheta{p^2\log p}}$.

	\begin{lemma}\label{lem:lower bound of event}
		For  $\mat{W}^T\mat{W}$ defined as above and sufficiently small $\gamma = 2^{-\bigtheta{p^2\log n}}$, $\Pr[\event] > 0.9$.
	\end{lemma}
	\begin{proof}
		By equation (2) in \cite{shen2001} we know that $\Pr[\sqrt{n}\lambda_p\ge y] = \exp\left( -(y^2/2+y)\right)$.
		Thus for $y=0.01$ and $\event_0\eqdef \set{\lambda_p\ge 0.01/\sqrt{n}}$ we get:
		\[\Pr\left[\event_0\right]=\Pr\left[\lambda_p\ge \frac{0.01}{\sqrt{n}} \right]  =\exp\left(-0.01005\right)> 0.99000033\]
		Also, we note that for every $i$, $\Pr\left[ |\lambda_i| \le 100n \right] \ge 1-2\exp(-32n)$, by setting $t=8\sqrt{n}$ in Corollary 5.35 of \cite{V10}.
		In what follows we condition on the event
 $\event'_0$ that $|\lambda_i|\le 100n$ for every $i\in [p]$.

		Then we consider the joint distribution $\mu$ of $\lambda_1,\dots,\lambda_p$ in $\mat{\Lambda}$.
		Let $\event_i\eqdef \set{\lambda_i-\lambda_{i+1}<\gamma}$ be the event that $\lambda_i$ and $\lambda_{i+1}$ has a gap smaller than $\gamma$. Thus $\event= \event_0 \land\left( \land_{i=1}^{p-1}\overline{\event_i} \right)$.
		To lower bound $\Pr[\event]$, we need to upper bound the probability of $\event_i$ for $1\le i\le p-1$.

		Let $f$ be the density function of $\mu$ as in (\ref{equ:density}),
		 % whose value at each point is bounded by its infinity norm $|f|_\infty$,
		and let $\mathsf{Leb}(\cdot)$ be the Lebesgue measure in $n$ dimensions.
		Then for every $i$,
		\[ \Pr[\event_i \st \event'_0] =\mu\left( \lambda_i-\lambda_{i+1}<\gamma \right) \le \mathsf{Leb}\left( \lambda_i-\lambda_{i+1}<\gamma \right) \cdot |f|_\infty = \bigo{\gamma/n}  \cdot |f|_\infty \]
		%
		% Notice that $\mathsf{Leb}\left( \lambda_i-\lambda_{i+1}<\gamma \right)  = \bigo{\gamma}$,
		Note that conditioning on $\event_0$ such that $\lambda_p\ge 0.01/\sqrt{n}$, the density function $f$ is bounded as:
		\[|f|_\infty \le \bigo{\exp\left(-\frac{1}{2}\lambda_1\right) \left(100\sqrt{n}\right)^{p/2} \lambda_1^{p^2/2}}
		= 2^{\bigo{p^2\log n}}\]
		As a result, we get $\Pr\left[\event_i \land \event_0 \st \event'_0\right ] \le {\gamma}\cdot 2^{\bigo{p^2\log n}}$.

		Therefore, the probability of $\event$ is lower bounded for sufficiently small $\gamma = 2^{-\bigtheta{p^2\log n}}$,
		\begin{align*}
			\Pr\left[\event\right] \ge& \Pr\left[\event'_0\right]\cdot \Pr\left[\event_0\land\left(\land_{i=1}^{p-1}\overline{\event_i}\right) \st \event'_0\right]\\
			\ge& \Pr\left[ \event'_0\right]\cdot\left(\Pr\left[\event_0\st \event'_0\right]- \sum_{i=1}^{p-1}\Pr\left[\event_i \land \event_0 \st \event'_0\right]\right)\\
			 >& \left(1-2p\exp(-32n)\right)\cdot \left(0.99000033 -(p-1)\gamma \cdot 2^{\bigo{p^2\log n}}\right)>0.9
		\end{align*}
	\end{proof}

	Conditioned on event $\event$ and recalling that $\lambda'_i \in \left[\lambda_i- \bigo{\frac{np}{Z^2(n)}}, \lambda_i + \bigo{\frac{np}{Z^2(n)}} \right]$,
	% Conditioned on event $\event$,
	the probability density of $\mat{\Lambda}'$ has only a negligible difference from that of $\mat{\Lambda}$, since the small disturbance of eigenvalues is dominated by the corresponding terms in $f(\mat{\Lambda})$.
	\begin{align*}
		\frac{f(\mat{\Lambda}')}{ f(\mat{\Lambda}) } = & \frac{\exp\left(-\frac{1}{2}\sum_{i=1}^p \lambda'_i \right)\prod_{i=1}^p {\lambda'_i}^{-1/2} \prod_{1\le i<j\le p} (\lambda'_i-\lambda'_j) }{\exp\left(-\frac{1}{2}\sum_{i=1}^p \lambda_i \right)\prod_{i=1}^p \lambda_i^{-1/2} \prod_{1\le i<j\le p} (\lambda_i-\lambda_j)}\\
		\le & \exp\left(\frac{p\cdot np}{Z^2(n)}\right)  \left(\frac{\lambda_p-\frac{np}{Z^2(n)}}{\lambda_p}\right)^{-p/2}\prod_{1\le i<j\le p} \frac{\lambda_i-\lambda_j+\frac{2np}{Z^2(n)}}{\lambda_i-\lambda_j}\\
		\le &  \exp\left(\frac{np^2}{Z^2(n)}\right) \cdot\left(1+\frac{np}{\lambda_p \cdot Z^2(n)} \right)^p \left(1+\frac{2np}{Z^2(n) \cdot \min_{i\ne j}|\lambda_i-\lambda_j|}\right)^{p(p-1)/2}\\
		\le & \exp\left(\frac{np^2}{Z^2(n)}\right) \cdot\left(1+\frac{100\sqrt{n}\cdot np}{Z^2(n)} \right)^p
		\left(1+\frac{2np} {Z^2(n)\cdot\gamma }\right)^{p(p-1)/2} =1+\bigo{\frac{np^3 \gamma^{-1}}{Z^2(n)}}
	\end{align*}
	Similarly we can prove ${f(\mat{\Lambda}')}/{ f(\mat{\Lambda}) }  \ge 1-\bigo{{np^3 \gamma^{-1}}/{Z^2(n)}}$.
	Thus the total variation distance between $\mat{\Lambda}$ and  $\mat{\Lambda}'$ conditioned on $\event$ is $\dtv{ \mat{\Lambda}, \mat{\Lambda}'\st \event}\le \bigo{{np^3 \gamma^{-1}}/{Z^2(n)}}= \bigo{{1}/{n^2}}$ for sufficiently large $Z(n)\ge (np)^{1.5}\gamma^{-0.5} =2^{\bigtheta{p^2\log n}}$.
	Thus, for sufficiently large $n$, we have:
	\begin{align*}
		\dtv{ \mat{\Lambda}, \mat{\Lambda}'}\le \Pr[\overline{\event}] + \Pr[\event] \cdot \dtv{ \mat{\Lambda}, \mat{\Lambda}'\st \event} \le 0.1+\bigo{1/{n^2}} < 0.11
	\end{align*}

	Therefore, with as many as $q=p$ non-adaptive queries to the oracle matrix $\mat{M}$, the two distributions $\mat{M}_1$ and $\mat{M}_2$ cannot be distinguished with advantage greater than $0.11$.
	At least $p+1$ queries are necessary to distinguish those two matrices $\mat{M}_1$ and $\mat{M}_2$ of rank $\le p$ and rank $n$, respectively.

	Indeed, the above argument holds for every constant advantage $\eps$ if $y=\eps/3$, $t>\sqrt{12n/\eps}$, and $\gamma$ is sufficiently small in the proof of Lemma~\ref{lem:lower bound of event}, and letting $Z(n)$ be sufficiently large.	
\end{proof}

% !TEX root=./MatrixTesting.tex

\subsubsection{Equivalence Between Adaptive and Non-Adaptive Protocols}

Now, we consider the adaptive query matrix $\mat{V}=[\vec{v}_1,\dots,\vec{v}_q]$ where $\vec{v}_i$ is the $i$-th query vector. Without loss of generality, we can assume that $\forall i, \vec{v}_i$ is a unit vector and it is orthogonal to query vectors $\vec{v}_1,\cdots, \vec{v}_{i-1}$. This gives us the following formal definition of an adaptive query protocol.

\begin{definition}
For a target matrix $\mat{M}$, an adaptive query protocol $P$ will output a sequence of query vectors $\vec{v}_1,\vec{v}_2,\cdots$. It is called a \textit{normalized adaptive protocol} if for any $i$, the query vector $\vec{v}_i$ output by $P$ satisfies
\begin{enumerate}
  \item $\vec{v}_i$ is a unit vector;
  \item $\vec{v}_i$ is orthogonal to the vectors $\vec{v}_1,\cdots, \vec{v}_{i-1}$;
  \item $\vec{v}_i$ is deterministically determined by $\mat{M} \times [\vec{v}_1,\dots,\vec{v}_{i-1}]$.
\end{enumerate}
\end{definition}

Let $P^{std}$ be a standard protocol which outputs $\vec{e}_1, \vec{e}_2, \cdots$ where $\vec{e}_i$ is the $i$-th standard basis vector. We then show that adaptivity is unnecessary by proving that $P^{std}$ has the same power as any normalized adaptive protocol to distinguish the matrix $\mat{M}_1$ and $\mat{M}_2$ defined in the previous section.

More formally, we show the following lemma for matrix $\mat{M}_2$:

\begin{lemma}
Fix any $n\times n$ orthogonal matrix $\left[\mat{U},\mat{U}^{\bot}\right]$ and any normalized adaptive protocol $P$. Consider $\mat{M}_2=\mat{U}\times \mat{G}^T + \frac{1}{\poly{n}}\cdot \mat{U}^{\bot} \times \mat{H}^T$ where $\mat{G}^T$ be a $p\times n$ i.i.d. Gaussian matrix, and $\mat{H}^T$ be a $(n-p)\times n$ i.i.d. Gaussian matrix. Let matrix $\mat{V}=[\vec{v}_1,\dots,\vec{v}_q]$ and $\mat{V}^{std}=[\vec{e}_1, \cdots, \vec{e}_q]$ be the query matrix output by protocol $P$ and $P^{std}$, correspondingly. We have the matrix $\mat{M}_2\mat{V}$ has the same distribution as $\mat{M}_2\mat{V}^{std}$.

\end{lemma}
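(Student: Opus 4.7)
The plan is to prove by induction on $i$ that, for any normalized adaptive protocol $P$, the first $i$ responses $T_j := \mat{M}_2\vec{v}_j$ are mutually independent and each distributed as $\mat{U}\vec{g} + \frac{1}{\poly{n}}\mat{U}^{\bot}\vec{h}$, where $\vec{g}\sim N(0,I_p)$ and $\vec{h}\sim N(0,I_{n-p})$ are independent. Since this joint distribution does not depend on which normalized protocol produced the queries, applying the same statement to $P^{std}$ (whose queries $\vec{e}_1,\dots,\vec{e}_q$ trivially satisfy the normalized conditions) gives the same joint distribution for $\mat{M}_2\mat{V}^{std}$, which is what the lemma asks.

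The starting observation is that since $[\mat{U},\mat{U}^{\bot}]$ is orthogonal, the response $T_j = \mat{U}\mat{G}^T\vec{v}_j + \frac{1}{\poly{n}}\mat{U}^{\bot}\mat{H}^T\vec{v}_j$ is in bijection with the pair $(\mat{G}^T\vec{v}_j,\,\mat{H}^T\vec{v}_j)$ via $\mat{U}^T T_j = \mat{G}^T\vec{v}_j$ and $\poly{n}\cdot(\mat{U}^{\bot})^T T_j = \mat{H}^T\vec{v}_j$. Consequently, conditioning on the history $(T_1,\dots,T_{i-1})$ fixes $\vec{v}_1,\dots,\vec{v}_i$ as a deterministic orthonormal set (by property 3 of a normalized protocol) and also fixes the linear images $\mat{G}^T\vec{v}_j$ and $\mat{H}^T\vec{v}_j$ for all $j<i$.

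The core step is to show that, after conditioning on the history, $\mat{G}^T\vec{v}_i\sim N(0,I_p)$ independently of that history (and similarly for $\mat{H}^T\vec{v}_i$). Each row of $\mat{G}^T$ is an i.i.d.\ $N(0,I_n)$ vector, so by rotational invariance its projections onto any \emph{fixed} orthonormal basis of $\R^n$ are mutually independent $N(0,1)$ random variables. Extending $\vec{v}_1,\dots,\vec{v}_{i-1}$ to such a basis that includes $\vec{v}_i$, one sees that conditional on the projections onto $\vec{v}_1,\dots,\vec{v}_{i-1}$, the projection onto $\vec{v}_i$ is still $N(0,1)$. Since after conditioning on the history both the subspace $\mathrm{span}(\vec{v}_1,\dots,\vec{v}_{i-1})$ and the particular unit vector $\vec{v}_i$ in its orthogonal complement are deterministic, applying this coordinatewise to each of the $p$ rows of $\mat{G}^T$ yields $\mat{G}^T\vec{v}_i \sim N(0,I_p)$ on every atom of the conditioning. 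The same argument applies to $\mat{H}^T\vec{v}_i$, so $T_i$ has the claimed conditional distribution and is independent of the history, which closes the induction.

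The main subtlety this plan has to navigate is that $\vec{v}_i$ is itself a random variable measurable with respect to $\mat{G}$ and $\mat{H}$ through the history, which raises the worry that an adaptive choice might correlate $\vec{v}_i$ with $\mat{G}^T\vec{v}_i$ and distort its marginal law. The resolution is that after conditioning on the full history, $\vec{v}_i$ is no longer random: it is a fixed unit vector lying in the fixed orthogonal complement of $\mathrm{span}(\vec{v}_1,\dots,\vec{v}_{i-1})$, and within that complement the conditional law of $\mat{G}^T$ remains i.i.d.\ Gaussian by the rotational invariance of $N(0,I_n)$. Because the atom-wise conditional distribution of $T_i$ is the same law on every atom, the $T_i$'s are independent and identically distributed, matching the distribution one sees under $P^{std}$, and therefore $\mat{M}_2\mat{V}$ and $\mat{M}_2\mat{V}^{std}$ agree in distribution.
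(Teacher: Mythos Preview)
Your argument is correct. The inductive claim---that conditional on the history $(T_1,\dots,T_{i-1})$, the next response $T_i$ is distributed as $\mat{U}\vec{g}+\frac{1}{\poly{n}}\mat{U}^{\bot}\vec{h}$ with $(\vec{g},\vec{h})\sim N(0,I_p)\times N(0,I_{n-p})$ independently of the history---is exactly what is needed, and you handle the one genuine subtlety (that $\vec{v}_i$ becomes a \emph{fixed} unit vector in the orthogonal complement once you condition on the history, so rotational invariance of $N(0,I_n)$ applies cleanly) correctly.

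The paper reaches the same conclusion by a more algebraic route: it builds a sequence of orthogonal rotations $\mat{R}_1,\dots,\mat{R}_q$ that carry the adaptive queries $\mat{V}_i$ to the standard queries $\mat{V}_i^{std}$, and then shows inductively that the rotated Gaussian matrix $\mat{GR}_i := \mat{G}^T\mat{R}_1^{-1}\cdots\mat{R}_i^{-1}$ remains i.i.d.\ Gaussian, exploiting the block structure of $\mat{R}_{i+1}^{-1}$ and the fact that $\mat{R}_{i+1}$ depends only on the first $i$ columns of $\mat{GR}_i$. Both proofs rest on the same two facts---orthogonality of the $\vec{v}_j$'s and rotational invariance of the Gaussian---but yours works directly with conditional laws of the responses, while the paper tracks the distribution of the full rotated $p\times n$ matrix. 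Your presentation is shorter and avoids the explicit rotation machinery; the paper's version is more concrete in that it exhibits the actual change of basis witnessing the distributional equality.
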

\proof
Since the matrix $\mat{G}^T\cdot \mat{V}^{std}$ and $\mat{H}^T\cdot \mat{V}^{std}$ is i.i.d. Gaussian, that is, every element in two matrices is from standard Gaussian distribution and independent of each other, it is enough to show both $\mat{G}^T\cdot \mat{V}$ and $\mat{H}^T\cdot \mat{V}$ are i.i.d. Gaussian. In the following, we will show $\mat{G}^T\cdot \mat{V}$ are i.i.d. Gaussian and independent of $\mat{H}^T\cdot \mat{V}$, and the similar argument also holds for $\mat{H}^T\cdot \mat{V}$.

Let $\mat{V}_i = [\vec{v}_1,\dots,\vec{v}_i]$ and  $\mat{V}_i^{std}=[\vec{e}_1, \cdots, \vec{e}_i]$. Note that $\vec{v}_1,\dots,\vec{v}_q$ are unit vectors and orthogonal to each other. We first define orthogonal rotation matrices $\mat{R}_1, \mat{R}_2,\cdots$ recursively as follows. The matrix $\mat{R}_1$ will take $\vec{v}_1$ to $\vec{e}_1$. The matrix $\mat{R}_i$ will take $\vec{e}_j$ to $\vec{e}_j$ for any $j<i$ and takes $\mat{R}_{i-1}\cdots \mat{R}_1\vec{v}_i$ to $\vec{e}_i$. Note, $\mat{R}_i$ only depends on the first $i$ query vectors. We have $\mat{R}_{i}\cdots \mat{R}_1\mat{V}_i=\mat{V}_i^{std}$ for any $i\leq q$, and  $\mat{G}^T  \mat{V} = \mat{G}^T\cdot \mat{R}_1^{-1}\cdots \mat{R}_q^{-1}\cdot \mat{V}^{std}$. Define matrix $\mat{GR}_i \triangleq \mat{G}^T\cdot \mat{R}_1^{-1}\cdots \mat{R}_i^{-1}$. In the following, we use induction to show for any $i\leq q$, $\mat{GR}_i$ is i.i.d. Gaussian and independent of $\mat{H}^T\cdot \mat{V}$. It is enough to show that for any fixed $H$, for any $i\leq q$, $\mat{GR}_i$ is i.i.d. Gaussian.

For $i=1$, since $\mat{R}_1$ is determined by $\vec{v}_1$ which is independent of $\mat{G}^T$ and $\mat{R}_1$ is an orthogonal matrix, $\mat{GR}_1 = \mat{G}^T \mat{R}_1^{-1}$ is i.i.d. Gaussian.

Now, suppose $\mat{GR}_i$ is i.i.d. Gaussian, we will prove $\mat{GR}_{i+1} = \mat{GR}_i \cdot \mat{R}_{i+1}^{-1}$ is also i.i.d. Gaussian.  On one hand, $\mat{R}_{i+1}$ is determined by $\vec{v}_1, \cdots, \vec{v}_{i+1}$ which are determined by the response of the first $i$ queries, that is, determined by $\mat{M_2} \mat{V}_i$. We have
\begin{displaymath}
\begin{array}{ll}
 & \mat{M_2} \mat{V}_i  \\
= & \mat{U} \mat{G}^T \mat{R}_1^{-1}\cdots \mat{R}_i^{-1} \mat{V}^{std}_i + \frac{1}{\poly{n}} \mat{U}^{\bot} \mat{H}^T \mat{R}_1^{-1}\cdots \mat{R}_i^{-1} \mat{V}^{std}_i\\
= & (\mat{U}\times (\mat{G}^T \mat{R}_1^{-1}\cdots \mat{R}_i^{-1}) + \frac{1}{\poly{n}} \mat{U}^{\bot} \times (\mat{H}^T \mat{R}_1^{-1}\cdots \mat{R}_i^{-1}))\cdot \mat{V}^{std}_i
\end{array}
\end{displaymath}
It means $\mat{R}_{i+1}$ is determined by the first $i$ columns of matrix $\mat{GR}_i = \mat{G}^T \mat{R}_1^{-1}\cdots \mat{R}_i^{-1}$ and $\mat{H}^T \mat{R}_1^{-1}\cdots \mat{R}_i^{-1}$. Note by the inductive assumption, $\mat{GR}_i$ is i.i.d. Gaussian. Therefore, $\mat{R}_{i+1}$ is independent of the last $n-i$ columns of $\mat{GR}_i$.

On the other hand, $\mat{R}_{i+1}\vec{e}_j=\vec{e}_j$ for any $j\leq i$, and thus
$\mat{R}_{i+1}^{-1}=
  \begin{bmatrix}
    I_i & 0 \\
    0 & \mat{R}' \\
  \end{bmatrix}
$
where $I_i$ is the $i\times i$ identity matrix. Note the matrix $\mat{R}'$ is actually determined by the protocol, $\mat{U}$, $\mat{U}^{\bot}$, $\mat{H}$ and also the first $i$ columns of $\mat{GR}_i$, but it is independent of the last $n-i$ columns of $\mat{GR}_i$.  Consequently, in the multiplication of $\mat{GR}_i\times \mat{R}_{i+1}^{-1}$, the first $i$ columns are the same as those in $\mat{GR}_i$. For the other $n-i$ columns, the $a$-th element of $j$-th column is $\sum_{b\geq i+1} gr_{ab}r'_{b,j}$ where $gr_{ab}, r'_{b,j}$ are the elements in $\mat{GR}_i, \mat{R}'$ correspondingly. Since $r'_{b,j}$ is independent of the last $n-i$ columns of $\mat{GR}_i$, it is independent of $gr_{ab}$ when $b\geq i+1$. Since $\mat{GR}_i$ is i.i.d Gaussian and $\mat{R}'$ is an orthogonal matrix, the last $n-i$ columns of $\mat{GR}_{i+1}$ is also i.i.d. Gaussian and independent of the first $i$ columns.  Therefore, we show $\mat{GR}_{i+1} = \mat{G}^T\cdot \mat{R}_1^{-1}\cdots \mat{R}_{i+1}^{-1}$ is still i.i.d. Gaussian.

By induction $\mat{G}^T \mat{V}$ is i.i.d. Gaussian, and independent of $\mat{H}^T \mat{V}$. This finishes our proof.

\endproof

Obliviously, the same argument also holds for $\mat{M}_1$. Combining these results and Theorem~\ref{thm:rank}, together with Yao's minimax principle \cite{yao1977probabilistic},
\begin{theorem}
	Let constant $\eps>0$ be the error tolerance and let $\mat{M}$ be an $n\times n$ oracle matrix with adaptive queries.
	For integer $p<p'\leq n$, at least $p+1$ queries are necessary for any randomized algorithm to distinguish
  whether $\rank{\mat{M}}\le p$ or $\rank{\mat{M}} \ge p'$ with advantage $\ge \eps$.
\end{theorem}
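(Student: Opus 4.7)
The plan is to combine three ingredients already established in the excerpt: (i) the non-adaptive lower bound from Theorem~\ref{thm:rank}, (ii) the lemma showing that for any normalized adaptive protocol $P$ the query response distribution $\mat{M}_2\mat{V}$ equals $\mat{M}_2\mat{V}^{std}$ (and the same for $\mat{M}_1$ by the identical argument), and (iii) Yao's minimax principle. I would invoke Yao first so that it suffices to prove the lower bound for a fixed deterministic algorithm running on the mixture distribution that places equal mass on $\mat{M}_1$ and $\mat{M}_2$ used in the proof of Theorem~\ref{thm:rank}.

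Next I would argue that, without loss of generality, any deterministic adaptive algorithm can be put into the normalized form of the earlier definition. Given the responses $\mat{M}\vec{v}_1,\ldots,\mat{M}\vec{v}_{i-1}$, the algorithm may Gram--Schmidt its next raw query against the previously asked vectors and normalize the result; the original query and its response are then recoverable by a linear combination of previous responses together with the new one, so this change of basis preserves the algorithm's information and decision. Thus any $q$-query deterministic algorithm corresponds to a normalized adaptive protocol $P$ with the same success probability on our distribution.

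Now I apply the equivalence lemma in both directions. For $\mat{M}=\mat{M}_2$ the lemma directly gives that $\mat{M}_2\mat{V}$ has the same distribution as $\mat{M}_2\mat{V}^{std}$, where $\mat{V}^{std}=[\vec{e}_1,\ldots,\vec{e}_q]$. The same inductive argument goes through verbatim for $\mat{M}=\mat{M}_1=\mat{U}\mat{G}^T$ (the $\mat{H}$ term is just absent, which only simplifies the calculation, since the step that mattered was rotational invariance of the Gaussian columns of $\mat{G}^T$ against right-multiplication by an orthogonal matrix determined by the past). Consequently, the joint distribution of $(\mat{M},\mat{M}\mat{V})$ produced by any normalized adaptive protocol running on our hard mixture equals the joint distribution of $(\mat{M},\mat{M}\mat{V}^{std})$, i.e.\ what is seen by the non-adaptive protocol that queries $\vec{e}_1,\ldots,\vec{e}_q$.

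Finally, by Theorem~\ref{thm:rank} applied to the particular non-adaptive query matrix $\mat{V}^{std}$, no algorithm can distinguish $\mat{M}_1\mat{V}^{std}$ from $\mat{M}_2\mat{V}^{std}$ with advantage $\ge\eps$ when $q\le p$; hence neither can any adaptive algorithm, and $p+1$ queries are necessary. The main obstacle is really the verification in the previous lemma, which the excerpt already supplies; the rest is the bookkeeping above (normalization of queries, extension of the equivalence to $\mat{M}_1$, and the application of Yao). A minor technical care point is to ensure that the padding reduction used at the start of Theorem~\ref{thm:rank} (reducing the $(p,p')$ distinguishing problem to the $(p,n)$ case) is compatible with adaptive queries, which it is, since appending zero rows and columns to $\mat{M}'$ commutes with adaptive querying of the padded matrix.
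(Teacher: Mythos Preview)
Your proposal is correct and follows essentially the same approach as the paper: the paper's own proof of this theorem is just the one-line remark that the equivalence lemma also holds for $\mat{M}_1$, and then combines Theorem~\ref{thm:rank} with Yao's minimax principle. You spell out more of the bookkeeping (normalization via Gram--Schmidt, the padding reduction being compatible with adaptivity), but the structure and ingredients are identical.
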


\subsection{Lower Bound for Trace Estimation}
\label{sec:tra}

In this section, we lower bound the number of queries needed to approximate the trace $\tr{\mat{M}}$ of a matrix $\mat{M}$.
In particular we reduce this problem to triangle detection as will be proved in Theorem~\ref{thm:tri}. For the trace estimation problem, Avron and Toledo \cite{AT11} analyzed the convergence of randomized trace estimators via a similar matrix vector products framework. In their model, for an unknown matrix $\mat{M}$, they can access it via $\vec{v}^T\mat{M}\vec{v}$; while in our model, we only consider the right multiplication of the form $\mat{M}\vec{v}$.

% In the proof we will make a connection to
% lower bounds for triangle detection, and thus also establish a lower bound to detect, when $\mat{M}$ represents the adjacency matrix of a graph, if $\mat{M}$ contains a triangle or not.

\begin{theorem}\label{thm:trace}
For any integer $C > 0$ and symmetric $n \times n$ matrix $\mat{M}$ with entries in $\set{0, 1, 2, \ldots, n^3}$, the number of possibly adaptively chosen query vectors, with entries in $\set{0, 1, 2, \ldots, n^C}$, needed to approximate $\tr{\mat{M}}$ up to any relative error, is $\bigomega{n/\log n}$.
\end{theorem}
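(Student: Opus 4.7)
The plan is to reduce the problem to triangle detection, which is shown to require $\Omega(n/\log n)$ queries in Theorem~\ref{thm:tri}. The basic observation is the classical identity that, for the $n\times n$ adjacency matrix $\mat{A}\in\{0,1\}^{n\times n}$ of an undirected simple graph $G$, the trace of $\mat{A}^3$ counts closed walks of length three, so $\tr{\mat{A}^3}=6\cdot(\#\text{triangles in }G)$. In particular, $\tr{\mat{A}^3}=0$ if $G$ is triangle-free and $\tr{\mat{A}^3}\ge 6$ otherwise. Hence any algorithm that approximates the trace up to constant relative error distinguishes these two cases.

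Given a triangle-detection instance $G$, I would set the oracle matrix to be $\mat{M}=\mat{A}^3$. This is symmetric (since $\mat{A}$ is symmetric), and each entry of $\mat{A}^3$ counts a set of length-three walks, hence is an integer bounded by $n^2\le n^3$, so $\mat{M}$ lies in the allowed range $\{0,1,\dots,n^3\}$. Then, given any (possibly adaptive) algorithm $\mathcal{B}$ that approximates $\tr{\mat{M}}$ up to relative error, say, $1/2$, using $q$ matrix-vector queries to $\mat{M}$, I would simulate each such query $\mat{M}\vec{v}=\mat{A}^3\vec{v}$ using three consecutive adaptive queries to $\mat{A}$: compute $\vec{u}_1=\mat{A}\vec{v}$, then $\vec{u}_2=\mat{A}\vec{u}_1$, then $\mat{A}\vec{u}_2=\mat{A}^3\vec{v}$, and feed the answer back to $\mathcal{B}$. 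This yields a triangle-detection algorithm using $3q$ adaptive queries to $\mat{A}$.

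The only subtlety is making sure the simulated queries still live in the allowed entry range for the triangle-detection lower bound. If $\vec{v}$ has entries in $\{0,\dots,n^C\}$, then the intermediate vectors $\vec{u}_1,\vec{u}_2$ have nonnegative integer entries bounded by $n\cdot n^{C+1}=n^{C+2}$, since $\mat{A}$ is $0/1$. Thus the simulation uses queries with polynomially bounded integer entries, which is exactly the regime in which Theorem~\ref{thm:tri} gives the $\Omega(n/\log n)$ lower bound (the triangle-detection statement is proved for any constant-$C$ entry bound). Combining, $3q=\Omega(n/\log n)$, so $q=\Omega(n/\log n)$.

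The main obstacle is really just bookkeeping around the entry sizes, and confirming that Theorem~\ref{thm:tri} as proved later applies when queries are allowed to have entries up to $n^{C+2}$ rather than $n^C$; since the stated lower bound holds for every constant exponent, this costs nothing. Once those ranges are verified, the reduction is a clean black-box step and no further probabilistic or analytic work is required beyond what goes into the triangle-detection lower bound itself.
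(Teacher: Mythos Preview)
Your proposal is correct and follows essentially the same approach as the paper: reduce to triangle detection via $\tr{\mat{A}^3}=6\cdot(\#\text{triangles})$, simulate each query to $\mat{M}=\mat{A}^3$ by three adaptive queries to $\mat{A}$, and invoke Theorem~\ref{thm:tri}. Your additional bookkeeping on the entry sizes of the intermediate vectors $\vec{u}_1,\vec{u}_2$ is a useful clarification that the paper's own proof leaves implicit.
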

\begin{proof}
Suppose we had a possibly adaptive query algorithm making $q(n)$ queries which for a symmetric matrix $\mat{M}$, could approximate $\tr{\mat{M}}$ up to any relative error.
If $\mat{M} = \mat{A}^3$ for a symmetric matrix $\mat{A}$, we can run the trace esimation algorithm on $\mat{M}$ as follows: if $\vec{x}_1$ is the first query, we compute $\mat{A}\vec{x}_1$, then $\mat{A}(\mat{A}\vec{x}_1)$, then $\mat{A}(\mat{A}(\mat{A}\vec{x}_1)) = \mat{A}^3 \vec{x}_1$.
This then determines the second query $\vec{x}_2$, and we similarly compute $\mat{A}\vec{x}_2$, then $\mat{A}(\mat{A}\vec{x}_2)$, then $\mat{A}(\mat{A}(\mat{A}\vec{x}_2)) = \mat{A}^3\vec{x}_2$, etc. Thus, given only query access to $\mat{A}$, we can simulate the algorithm on $\mat{M} = \mat{A}^3$ with $3q(n)$ adaptive queries.

Now, it is well known that for an undirected graph $G$ with adjacency matrix $\mat{A}$, the trace $\tr{\mat{A}^3}/6$ is the number of triangles in $G$. By the argument above, it follows that with $3q(n)$ queries to $\mat{A}$, we can determine if $G$ has a triangle or has no triangles.
On the other hand, by Theorem~\ref{thm:tri} below, at least $\bigomega{n/\log n}$ queries to $\mat{A}$ are necessary for any adaptive algorithm to decide if there is a triangle in $G$.
Therefore $3q(n)=\bigomega{n/\log n}$ and hence we complete the proof with $q(n)=\bigomega{n/\log n}$.
\end{proof}

\subsection{Deciding if $\mat{M}$ is a Symmetric Matrix}\label{sec:sym}
\begin{theorem}\label{thm:symmetric}
Given an $n\times n$ matrix $\mat{M}$ over any finite field or over fields  $\R$ or $\C$, $O(\log(\frac{1}{\varepsilon}))$ queries are enough to test whether $\mat{M}$ is symmetric or not with probability $1-\varepsilon$.
\end{theorem}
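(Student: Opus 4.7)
The plan is to probe the skew-symmetric part $\mat{A} := \mat{M} - \mat{M}^T$ through bilinear forms, using only right multiplications. The key identity is that for any two vectors $\vec{x},\vec{y}$,
\[ \vec{y}^T \mat{A}\vec{x} \;=\; \vec{y}^T(\mat{M}\vec{x}) \;-\; \vec{x}^T(\mat{M}\vec{y}), \]
and both scalars on the right are computable from the two right-queries $\mat{M}\vec{x}$ and $\mat{M}\vec{y}$. So the algorithm draws $k = O(\log(1/\varepsilon))$ independent pairs $(\vec{x}_i,\vec{y}_i)$ of random vectors, issues the $2k$ queries $\mat{M}\vec{x}_i, \mat{M}\vec{y}_i$, and declares $\mat{M}$ symmetric iff the above scalar is zero for every $i$. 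Completeness is immediate: if $\mat{A}=\mat{0}$, every check passes.

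For soundness, I need to show that when $\mat{A}\neq\mat{0}$, a single check fails with probability bounded below by an absolute constant. Let $\F$ be the underlying field and let $S\subseteq \F$ be a finite sampling set: take $S=\F$ when $\F$ is finite and $S=\{0,1\}$ over $\R$ or $\C$; sample each coordinate of $\vec{x}_i,\vec{y}_i$ independently and uniformly from $S$. The argument is a two-step rank/Schwartz--Zippel bound. First, since $\mat{A}\neq\mat{0}$ some row of $\mat{A}$ is a nonzero linear form, so by the Schwartz--Zippel lemma (or directly by a kernel-counting argument when $S=\F$ is finite),
\[ \Pr[\mat{A}\vec{x} = \vec{0}] \;\le\; 1/|S|. \]
Second, conditional on $\mat{A}\vec{x}\neq \vec{0}$, the scalar $\vec{y}^T(\mat{A}\vec{x})$ is a nonzero linear form in $\vec{y}$, so
\[ \Pr\bigl[\vec{y}^T(\mat{A}\vec{x})=0 \,\big|\, \mat{A}\vec{x}\neq\vec{0}\bigr] \;\le\; 1/|S|. \]
Multiplying, each check fails (i.e.\ detects asymmetry) with probability at least $(1-1/|S|)^2 \ge 1/4$ for every $|S|\ge 2$.

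Hence the probability that all $k$ independent checks erroneously pass is at most $(3/4)^k$, which is at most $\varepsilon$ for $k = O(\log(1/\varepsilon))$, yielding a total query count of $2k = O(\log(1/\varepsilon))$. The main subtlety to watch for is the $|S|=2$ case (most notably $\F[2]$), where a one-shot Schwartz--Zippel bound for the bilinear polynomial $\vec{y}^T\mat{A}\vec{x}$ of degree $2$ only gives failure probability $\le 2/|S|$ which is trivial; splitting into the two linear steps above is what keeps the constant $1/4$ bound valid uniformly across all fields. Everything else is straightforward amplification.
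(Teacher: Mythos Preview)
Your proposal is correct and follows essentially the same approach as the paper's own proof: draw two random vectors $\vec{x},\vec{y}$, compute the bilinear form $\vec{y}^T\mat{A}\vec{x}$ from the two right-queries $\mat{M}\vec{x},\mat{M}\vec{y}$, and use a two-step linear (Schwartz--Zippel) argument to get a uniform $1/4$ detection bound, then amplify. The only cosmetic difference is that the paper samples from the Gaussian distribution over $\R$ and $\C$ whereas you sample from $\{0,1\}$; your choice has the mild advantage that the $1/4$ analysis applies verbatim to all fields without a separate continuous-case remark.
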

\begin{proof}
We choose two random vectors $\vec{u}$ and $\vec{v}$, where over a finite field we choose from a uniform distribution and over fields $\R$ or $\C$ we choose the Gaussian distribution. We then compute $\mat{M}\vec{u}$ and $\mat{M}\vec{v}$. We declare $\mat{M}$ to be symmetric if and only if $\vec{u}^T\cdot\mat{M}\vec{v} = \vec{v}^T\cdot\mat{M} \vec{u}$. It is easy to check that if $\mat{M}$ is symmetric, the test will succeed. We then show if $\mat{M}$ is not symmetric, $\vec{u}^T\mat{M}\vec{v} \neq \vec{v}^T\mat{M} \vec{u}$ with constant probability, so we obtain success probability $1-\varepsilon$ by repeating the test $O(\log(\frac{1}{\varepsilon}))$ times.

Let $\mat{A}=\mat{M}-\mat{M}^T$. When $\mat{M}$ is not symmetric, $\mat{A}$ is not $0$. Thus, $\vec{u}^T\mat{M}\vec{v} = \vec{v}^T\mat{M} \vec{u}$ means $\vec{u}^T\mat{A}\vec{v} = 0$. We can treat this as a degree-$2$ polynomial in the entries of $\vec{v}^T$ and $\vec{u}$, i.e., this is $\sum_{i,j} u_i v_j \mat{A}_{i,j} = \sum_i u_i \sum_j v_j \mat{A}_{i,j}$. Thus, this is a non-zero polynomial and has at most constant probability of evaluating to $0$ for any underlying field. %Note, for field $\R$ or $\C$, the probability is $0$.
To see this, for each $i$, let $t_i = \sum_j v_j \mat{A}_{i,j}$. Then there will be at least one $t_i$ which is non-zero with probability at least $1/2$, for any underlying field. So now we get $\sum_i u_i t_i$. Fix all the $u_i$ except $u_i$ for a given $t_i$ that is non-zero. Then we obtain $S + u_i t_i$. Then if $u_i$ has at least two possible values, this is $0$ in one case and non-zero in the other case. So we obtain a probability of at least $1/4$ of detection overall.
\end{proof}

\subsection{Deciding if $\mat{M}$ is a Diagonal Matrix}\label{sec:diag}

Given an $n\times n$ matrix $\mat{M}$, we show that $\bigomega{\log\frac{1}{\eps}}$ queries are sufficient to test whether $\mat{M}$ is a diagonal matrix with error $\le \eps$.

The first query is an all ones vector which retrieves the sum of each row.
Then we take $\bigomega{\log\frac{1}{\eps}}$ random queries where each entry is uniformly sampled from $\zo$.
Every row containing non-zero entries off the diagonal can be detected with probability $1/2$ under such a random query, which implies bounded error $\le \eps$ after  $\bigomega{\log\frac{1}{\eps}}$ random queries.
Furthermore, this algorithm works over any field.
%matter if $\mat{M}$ and the queries are over $\mathbb{F}[2]$ or over $\R$.

\subsection{Deciding if $\mat{M}$ is a Unitary Matrix}\label{sec:unitary}
Given an $n\times n$ complex matrix $\mat{M}$, we show $1$ query is enough to test whether $\mat{M}$ is unitary or not, that is $\mat{M}^*\mat{M}=\mat{M}\mat{M}^*=I$.

We choose a random Gaussian vector $\vec{v}$, and compute $\mat{M}\vec{v}$. We declare $\mat{M}$ to be unitary if and only if $|\mat{M}\vec{v}|_2 = |\vec{v}|_2$. It is easy to check that if $\mat{M}$ is unitary, the test will succeed. We then show if $\mat{M}$ is not unitary, $|\mat{M}\vec{v}|_2 \neq |\vec{v}|_2$ with probability $1$. Let the singular value decomposition of $\mat{M}$ be $\mat{M} = \mat{U} \Sigma \mat{V}^T$. We have $|\mat{M}\vec{v}|_2^2 = |\Sigma \vec{u}|_2^2$, where $\vec{u}=\mat{V}^T\vec{v}$ is a random Gaussian vector with $|\vec{u}|_2^2 = |\vec{v}|_2^2$. The diagonal values in $\Sigma$ are not all $1$ since $\mat{M}$ is not unitary. Consider $\sum_i \sigma_i^2 u_i^2$, where $\sigma_i = \Sigma_{i,i}$. We want this to equal $|\vec{v}|_2^2 = |\vec{u}|_2^2 = \sum_i u_i^2$, so this is $\sum_i u_i^2 (\sigma_i^2 - 1) = 0$. This is a non-zero polynomial and has probability $0$ of evaluating to $0$ since the $u_i^2$ are drawn from a continuous distribution.

\subsection{Approximating the Maximum Eigenvalue}\label{sec:maxeigen}
The upper bound is due to \cite{musco2015randomized}. Given a matrix $\mat{M}\in \R^{m\times n}$, we can $\eps$-approximate the maximum eigenvalue of $\mat{M}$ by taking a random vector $\vec{v}\in\R^n$ and computing $\mat{M}^r\vec{v}$ for $r=\bigo{\eps^{-0.5}{\log n}}$.
This requires $r$ adaptive oracle queries to $\mat{M}$.
See \cite{musco2015randomized} for details.
See \cite{Simchowitz2018} for a matching lower bound for adaptive queries.
A non-adaptive $\bigomega{n}$ lower bound is given in \cite{li2016tight}.

% !TEX root = ./MatrixTesting.tex

\section{Streaming and Statistics Problems}

In this section we discuss the following streaming and statistics problems: testing an all ones column/row and identical columns/rows; approximating row norms or finding heavy hitters; and computing the majority or parity of columns/rows.

% \vspace{-2mm}

\subsection{Testing Existence of an All Ones Column/Row}\label{sec:all1}

Given a matrix $\mat{M}\in \zo^{m\times n}$, we want to test if $\mat{M}$ has a column (or row) with all $1$ entries.
It is trivial to test whether $\mat{M}$ has an all $1$ column (or row) using $n$ queries, e.g. $\vec{e}_1,\dots, \vec{e}_n$.
% We prove that this is also optimal by showing a matching lower bound.
We consider this problem both over $\mathbb{F}[2]$ and $\R$. Note in the case over $\R$, if we allow an arbitrary query vector, we can set one query $\vec{v} = \{1,2,4,8,...2^n\}$, and then reconstruct $\mat{M}$ exactly. Thus, in order to avoid such trivial cases, we also restrict the entries in the query to be in $\{0, 1, 2, \ldots, n^C\}$.

For testing the existence of an all ones column, we reduce the problem to the communication complexity of \disjoint.
\disjoint requires $\Omega(n)$ bits of communication to decide whether two sets with characteristic vectors $\vec{x},\vec{y}\in \zo^n$ are disjoint with constant probability, where the randomness is taken only over the coin tosses of the protocol (not over the inputs).
Suppose the fist $m-1$ rows in $\mat{M}$ each equal $\vec{x}^T$ while the last row equals $\vec{y}^T$. If we can decide whether $\mat{M}$ has an all ones column with $q$ non-adaptive queries $\vec{v}_1,\dots,\vec{v}_q$,
then we obtain a protocol for \disjoint with communication $q$ by letting Alice send a message $\left(\vec{x}^T\vec{v}_1,\dots, \vec{x}^T\vec{v}_q\right)$.
Thus from the communication complexity lower bound of \disjoint, $q = \Omega(n)$ queries over $\mathbb{F}[2]$ are necessary to test if there is an all ones column in $\mat{M}$, which shows that the na\"ive algorithm is already optimal.
For queries over $\R$, note that each entry $\vec{x}^T\vec{v}_j$ in the message is represented with ${\log n}$ bits,
and as a result $q\ge \bigomega{{n}/{\log n}}$.

Testing the existence of an all ones row with queries over $\R$ is trivial deterministically
by querying $\vec{v}=(1,1,\dots,1)$.
Next we study the query complexity of testing an all $1$s row deterministically with queries over $\mathbb{F}[2]$.
With any $q\le n-1$ queries $\mat{V}= \left[\vec{v}_1,\dots, \vec{v}_q\right]$, there is a non-zero vector $\vec{z}\ne\vec{0}$ such that $\vec{z}^T \mat{V}=\vec{0}$.
Therefore the query matrix $\mat{V}$ cannot distinguish whether a row is from $\vec{x}^T$ or $\vec{x}^T+\vec{z}^T$.
However, $\vec{x}^T$ and $\vec{x}^T+\vec{z}^T$ cannot be both all $1$ rows, and hence $n$ queries are necessary. This result also shows that the query complexity of the same problem over different fields might be quite different. We note for randomized algorithms, $O(\log(1/\epsilon))$ queries suffice over $\mathbb{F}[2]$ since the inner product of a row which is not all $1$s disagrees with the parity of the query with probability $1/2$.

Evaluating the OR/AND function of columns/rows of a Boolean matrix can be reduced to testing existence of an all 1 or all 0 column/row, and hence the same bounds follow.

\subsection{Identical Columns/Rows}\label{sec:identi}
\label{subsec:identical column/rows}

Given an $m\times n$ matrix $\mat{M}$, we want to test whether $\mat{M}$ has two identical columns or rows.
The trivial solution naively retrieves all information with $n$ queries (column vectors). In this section, we consider the query complexity over $\mathbb{F}[2]$.

Testing identical columns can be reduced to \disjoint.
Suppose Alice and Bob have $\vec{x},\vec{y}\in\set{0,1}^n$.
Let Alice expand her vector $\vec{x}$ to an $\frac{m}{2}\times n$ matrix $\mat{M}_1$ as follows:
the first row is $(1,\vec{x}^T)=(1,x_1,\dots,x_n)$; for $2\le i\le \frac{m}{2}$ the $i$-th row is $(1,z_1^{(i)},\dots, z_{n}^{(i)})$ where $z_j^{(i)}=1$ if $x_j=1$, and $z_j^{(i)}$ is uniformly random over $\set{0,1}$ if $x_j=0$, for $1\le j\le n$.
Bob expands his vector $\vec{y}$ to $\mat{M}_2$ similarly.
Putting $\mat{M}_1,\mat{M}_2$ together, we let $\mat{M}\eqdef \begin{bmatrix}
    \mat{M}_1 \\
    \mat{M}_2
\end{bmatrix}$.
Then $\mat{M}$ is an $m\times (n+1)$ matrix with the first column being all $1$s.
For $j\ge 2$, the $j$-th column is all $1$s if and only if $x_j=y_j=1$, in which case $\mat{M}$ has two identical rows of all $1$ entries.
For columns where $x_j,y_j$ are not both equal to $1$,
without loss of generality we may assume the $j$-th and $j'$-th columns satisfy $x_j=x_{j'}=0$ and $y_j=y_{j'}$.
Then two columns are identical only if $(z_j^{(2)},\dots,z_j^{(\frac{m}{2})})=(z_{j'}^{(2)},\dots,z_{j'}^{(\frac{m}{2})})$, which happens with probability $\le {1}/{2^{\frac{m}{2}-1}}$.
Therefore the overall probability of two not-all-ones columns in $\mat{M}$ being identical is bounded by ${n^2}/{2^{m/2}}$. Thus the error probability is less than $\epsilon$ if $m\geq 4\log(n/\epsilon)$.

That is, except for a small error $\epsilon$,
two identical columns in $\mat{M}$ are both all ones columns, which turns out to be equivalent to the case that two vectors $\vec{x},\vec{y}$ held by Alice and Bob are not disjoint.
Then, because \disjoint requires $\Omega(n)$ bits of communication, and after one oracle queries, Alice or Bob can communicate at most $m$ bits, at least $\Omega(n/m)$ oracle queries to $\mat{M}$ are necessary.

In the other hand, to test identical rows with error $\eps$, if suffices to make $q=\bigo{\log\left({m}/{\eps}\right)}$ random queries with each entry uniform random over $\zo$.
Since for every pair of distinct rows, a random query distinguishes them with probability $\frac{1}{2}$,
with $\left\lceil\log\left({m^2}/{\eps}\right)\right\rceil$ queries each pair of distinct rows is miscounted as identical with probability $\le {\eps}/{m^2}$.
By a union bound, the overall false-positive error is bounded by $\frac{\eps}{m^2}\cdot \binom{m}{2}  <\eps$, while there is no false-negative error since for all queries, identical rows always lead to identical outputs.

\subsection{Approximating Row Norms and Finding Heavy Hitters}\label{sec:norms}
To approximate the norms of each row in a matrix $\mat{M}\in \R^{m\times n}$, we recall the Johnson-Lindenstrauss lemma which guarantees that norms are roughly preserved when embedded to a lower dimensional space.
Thus, with $q=\bigo{\eps^{-2} \log m}$ and an $n\times q$ random query matrix $\mat{V}$, the output $\mat{M}\cdot\mat{V}$ preserves the row norms of $\mat{M}$ up to a $(1 \pm \eps)$-factor.

% \subsection{Heavy hitter}
The above algorithm also gives a natural upper bound for finding heavy hitters in the matrix $\mat{M}$,
which requires finding all rows $\mat{M}_i$ with norm $|\mat{M}_i|_2^2\ge \frac{1}{10} |\mat{M}|_F^2$ and not outputting any row $\mat{M}_i$ with  $|\mat{M}_i|_2^2\le \frac{1}{20} |\mat{M}|_F^2$ (rows with norm in between the two quantities can be classified arbitrarily).
Again we use the Johnson-Lindenstrauss lemma to approximate all row norms and decide which row is a heavy hitter.

\subsection{Majority}\label{sec:majority}

Given a matrix $\mat{M}\in\zo^{m\times n}$, we want to compute the majority of rows or columns in $\mat{M}$.

The majority of each row in $\mat{M}$ is trivial with an all $1$ query and addition over $\R$.
% since every query $\vec{e}_j$ reveals one entry in each row and hence with random queries of the form $\vec{e}_j$ we can get random samples for each row in $\mat{M}$.
% Thus $\bigo{\log{\frac{1}{\eps}}}$

For the majority of columns in $\mat{M}$,
we use a similar matrix $\mat{M}$ as that reduced from \disjoint in Section~\ref{subsec:identical column/rows} to obtain a lower bound.
More specifically, we consider $\vec{x},\vec{y}$ whose intersection is at most $1$.
Let $\mat{M}$ be obtained from $\vec{x},\vec{y}$ such that the first $m/2$ rows are identical to $\vec{x}$ and the remaining rows are identical to $y$. Thus, if $\mat{M}$ has a column with majority $1$, then the column must be all $1$s and we can conclude that $\vec{x}$ and $\vec{y}$ are not disjoint.
As a result, $\Omega({n/\log n})$ queries are necessary to compute the majority of columns in $\mat{M}$.

\subsection{Parity}\label{sec:parity}
For parity we consider a matrix $\mat{M}\in\zo^{m\times n}$ with only queries over $\F[2]$.
Computing the parity of rows in $\mat{M}$ is trivial by using a vector $(1,1,\dots,1)$.
However, to compute the parity of all columns of $\mat{M}$, we claim
at least $n$ queries are necessary.

To see this, let $\mat{V}$ be any $n\times q$ query matrix.
Note that the parity of columns of $\mat{M}$ remains the same if we sum up all the rows,
i.e., $\mat{M}'\eqdef \mat{P}\cdot \mat{M}$ has
the same parity on each column as $\mat{M}$, where $\mat{P}$ is defined to be
\[\mat{P} \eqdef \begin{bmatrix}
    1 & 1 & \dots & 1 \\
    0 & 0 & \dots & 0 \\
    \vdots & \vdots & \dots & \vdots \\
    0 & 0 & \dots & 0 \\
  \end{bmatrix}_{m\times m}
\]
Thus $\mat{M'}\mat{V}=\mat{PMV}$ is a $1\times q$ row vector followed by $m-1$ zero rows,
since $\mat{M'}$, as well as $\mat{P}$, is non-zero only in the first row.
Then we must have $q= \Omega(n)$ to obtain the output of $n$ parity instances from $\mat{M}' \mat{V}$.
Indeed, if we were to place the uniform distribution on $\mat{M}$ then its columns define
$n$ uniform parity bits, and for any fixed $\mat{V}$, we only obtain $q$ bits of information,
which is a contradiction to Yao's minimax principle (since there must be a fixed $V$
which succeeds with at least $2/3$ probability on this distribution).
This is a typical example illustrating the difference between left- and right-queries.

% !TEX root = ./MatrixTesting.tex

\section{Graph Problems}
In this section, we provide our results related to graph problems: testing graph connectivity in Section~\ref{sec:connect} and triangle detection in Section~\ref{sec:triangle}.

\subsection{Connectivity}\label{sec:connect}

% Note that any algorithm with $q(n)$ non-adaptive queries to the adjacency matrix or edge-vertex incidence matrix can be simulated by a linear sketching algorithm.
% In particular, when seeing an insertion or deletion of the edges of a graph, the sketching algorithm can compute the influence of the update and then apply it linearly on the sketch.
% The size of the linear sketch is exactly the output length of $q(n)$ queries.

\begin{theorem}
	Given the bipartite adjacency matrix $\mat{A} \in \{0,1\}^{n \times n}$ of a graph,
	we need $\bigomega{n/\log n}$ queries to decide whether the graph is connected with constant
        probability.
\end{theorem}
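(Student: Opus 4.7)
The plan is to prove this lower bound by simulating any matrix-vector query algorithm with a two-party communication protocol, and then invoking a two-party communication lower bound for a suitably hard problem. The key simulation fact is: if the adjacency matrix $\mat{A}$ splits additively as $\mat{A} = \mat{A}^A + \mat{A}^B$, with Alice holding $\mat{A}^A$ and Bob holding $\mat{A}^B$, and the two supports are disjoint (so that the sum remains a $0/1$ adjacency matrix), then each query $\vec{v}_i$---which is a deterministic function of the previously-observed responses and therefore known to both parties---can be handled by Alice sending $\mat{A}^A \vec{v}_i$ to Bob, who then computes $\mat{A}\vec{v}_i = \mat{A}^A\vec{v}_i + \mat{A}^B\vec{v}_i$. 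Assuming WLOG that query entries have magnitude $\poly{n}$, each entry of $\mat{A}^A\vec{v}_i$ is an integer of magnitude $\poly{n}$ and is therefore encodable in $O(\log n)$ bits. Hence $q$ queries yield a protocol of total cost $O(q \cdot n \log n)$ bits.

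The strategy is then to reduce from two-party \disjoint on inputs of size $N = \Theta(n^2)$, whose randomized communication complexity is $\bigomega{N} = \bigomega{n^2}$. Given Alice's $\vec{a}$ and Bob's $\vec{b}$ indexed by pairs in $[n] \times [n]$, the goal is to build a bipartite graph $G(\vec{a},\vec{b})$ on $L \cup R$ with $|L| = |R| = n$ whose adjacency matrix decomposes as $\mat{A}^A(\vec{a}) + \mat{A}^B(\vec{b})$ (disjoint supports), and whose connectivity encodes $\disjoint(\vec{a},\vec{b})$. A natural template is to partition $L = L_0 \sqcup L_1$ and $R = R_0 \sqcup R_1$, lay down a publicly-known backbone of edges (which can be assigned to either party's matrix) that keeps most of the graph pre-connected, and place Alice's private edges in the $L_0 \times R_1$ block and Bob's in the $L_1 \times R_0$ block, wiring them through appropriate gadgets so that the presence of a specific common index in $\vec{a}$ and $\vec{b}$ corresponds to the existence (or destruction) of a bridge between otherwise-separated pieces.

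Combining the two ingredients, any $q$-query connectivity algorithm yields a \disjoint protocol of cost $O(qn\log n)$, so $qn\log n = \bigomega{n^2}$, giving $q = \bigomega{n/\log n}$, matching the claim.

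The main obstacle is the gadget construction: disjointness is a conjunctive condition (``for every $k$, not both $a_k$ and $b_k$ equal $1$'') whereas edge-union, and hence backbone-augmentation in a graph, is disjunctive. Bridging this mismatch inside an $n+n$-vertex bipartite graph while preserving the additive decomposition with disjoint supports is delicate; a plausible workaround is a direct-sum-style embedding of $n$ independent \disjoint instances of size $n$ (whose joint randomized communication complexity is $\bigomega{n^2}$), each implemented by a small gadget confined to its own vertex-block, so that the graph is connected iff all $n$ sub-instances are disjoint. Verifying that each gadget really forces the correct connectivity behavior, and that the resulting matrix fits into the additive split, is where the technical work lies.
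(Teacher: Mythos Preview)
Your high-level framework---simulate the query algorithm by a two-party protocol and invoke a communication lower bound for \disjoint---is exactly right and matches the paper. The genuine gap is in the reduction itself. Because your simulation spends $O(n\log n)$ bits per query, you are forced to embed a $\Theta(n^2)$-bit \disjoint instance into connectivity of a bipartite graph on only $2n$ vertices; you correctly flag this as ``the main obstacle'' and do not resolve it. Your proposed workaround (a direct sum of $n$ size-$n$ \disjoint instances, each handled by a gadget ``confined to its own vertex-block'') runs into the same wall: the $n$ blocks together have only $2n$ vertices, so each gadget gets $O(1)$ vertices and hence $O(1)$ possible edges, which cannot encode an $n$-bit instance; letting gadgets share vertices destroys the clean ``connected iff every sub-instance is disjoint'' structure you need.

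The paper sidesteps all of this by making the simulation much cheaper rather than the hard problem much larger. It takes Alice's $\vec{u}\in\{0,1\}^{n-1}$ and Bob's $\vec{v}\in\{0,1\}^{n-1}$ and builds $\mat{A}$ whose first $n/2$ rows all equal $(\vec{u},1)$ and whose last $n/2$ rows all equal $(\vec{v},1)$. Every left vertex touches the last right vertex, so the graph is connected iff no other right vertex is isolated, i.e., iff $\vec{u}$ and $\vec{v}$ have no common $0$ coordinate---which, after complementing the inputs, is exactly $(n{-}1)$-bit \disjoint. The key point is that this $\mat{A}$ has only two distinct rows, so for any query $\vec{w}$ the response $\mat{A}\vec{w}$ is determined by the two scalars $(\vec{u},1)\cdot\vec{w}$ and $(\vec{v},1)\cdot\vec{w}$; Alice and Bob exchange these in $O(\log n)$ bits and both reconstruct $\mat{A}\vec{w}$. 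Thus $q$ (even adaptive) queries cost only $O(q\log n)$ bits, and the $\Omega(n)$ lower bound for \disjoint already gives $q=\Omega(n/\log n)$ with no gadgetry whatsoever. The missing idea, in short: instead of pushing the hard instance up to $n^2$ bits, push the per-query communication down to $O(\log n)$ bits by choosing a hard matrix with only two distinct rows.
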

\begin{proof}
	Consider two row vectors $\vec{u},\vec{v}\in\zo^{n-1}$ and construct matrix $\mat{A}$ as follows. The first $n/2$ rows of $\mat{A}$ equal $\vec{u}$ and the rest are equal to $\vec{v}$. Also, add an all $1$s column to $\mat{A}$. Now, matrix $\mat{A}$ can be treated as a bipartite adjacency matrix of a graph $G$ with $n$ vertices in each part, where $\mat{A}_{i,j} = 1$ iff there is an edge from the $i$-th left vertex to the $j$-th right vertex. Since all left vertices connect with the $n$-th right vertex, the graph $G$ is disconnected if and only if there exists some right vertex which does not connect with any left vertices, that is, the corresponding column of matrix $\mat{A}$ is an all $0$s column. In another word, $G$ is disconnected if and only if the two vectors $\vec{u}$ and $\vec{v}$ are $0$ on the same position.

	Thus any algorithm that uses $q(n)$ non-adaptive queries on the right of $\mat{A}$ to decide the connectivity of $G$ immediately implies a protocol for set disjointness, provided we replace $1$s with $0$s in the input characteristic vectors to the set disjointness
problem. So the communication is at most $q(n)\log n$, thus $q(n)=\bigomega{n/\log n}$.	
\end{proof}

\begin{theorem}
	Given the signed edge-vertex incidence matrix $\mat{M}\in\set{0,\pm 1}^{n\times\binom{n}{2}}$ of a graph $G$ with $n$ vertices,
	the connectivity of $G$ can be decided with $\polylog{n}$ non-adaptive queries.
\end{theorem}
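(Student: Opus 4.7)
The plan is to follow the Ahn--Guha--McGregor linear sketching framework for graph connectivity, which slots naturally into the right-multiplication model because $\mat{M}$ is linear in the edge set. A non-adaptive query matrix $\mat{V}\in\R^{\binom{n}{2}\times q}$ returns $\mat{M}\mat{V}$, and we must extract connectivity from this in post-processing with $q=\polylog{n}$.

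First I would record the key algebraic observation: writing $\mat{M}=[\vec{m}_1,\dots,\vec{m}_n]^T$ with $\vec{m}_v^T$ the row of vertex $v$, for any subset $S\subseteq V$ the sum $\sum_{v\in S}\vec{m}_v^T$ has support exactly on the edges leaving $S$, with $\pm 1$ entries, because each edge internal to $S$ contributes $+1$ and $-1$ at its two endpoints and cancels. Consequently, from $\mat{M}\mat{V}$ one can compute, by summing rows, a sketch of the signed cut indicator of any vertex set $S$. This is what lets a per-vertex linear sketch of $\mat{M}$ double as a sketch of arbitrary cut vectors.

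Next I would instantiate $\mat{V}$ as the horizontal concatenation of $O(\log n)$ independent $\ell_0$-sampler sketch matrices, each of width $O(\log^2 n)$, so $q=O(\log^3 n)=\polylog{n}$ and $\mat{V}$ is chosen obliviously of $G$. An $\ell_0$-sampler is a standard linear primitive which, given $\mat{S}\vec{x}$ for its associated sketch matrix $\mat{S}$, returns a uniformly random coordinate of the support of $\vec{x}$ with failure probability $1/\polylog{n}$, and it works for $\pm 1$ inputs over $\R$. Thus $\mat{M}\mat{V}$ provides $O(\log n)$ independent sampler readings for the signed incidence vector of every vertex, and by the observation above, for any vertex subset's signed cut vector, all computable without any further queries. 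The post-processing then runs Bor\r{u}vka's algorithm: initialize each vertex as its own component, and in round $t\in\{1,\dots,O(\log n)\}$, for each surviving component $C$, use the $t$-th fresh $\ell_0$-sampler copy applied to $\sum_{v\in C}\vec{m}_v^T$ to sample a boundary edge of $C$, then merge components joined by the sampled edges. The number of components halves each round, so $O(\log n)$ rounds suffice to recover the connected components (indeed, a spanning forest), from which connectivity is read off.

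The main subtlety, as usual in the AGM framework, is that the component $C$ in round $t$ is determined by the samples produced in earlier rounds, so the vector being sketched at round $t$ is correlated with the samplers used in rounds $1,\dots,t-1$. I would circumvent this by budgeting an \emph{independent} $\ell_0$-sampler copy for each of the $O(\log n)$ rounds; conditioned on the history, the $t$-th sampler is still fresh with respect to its current input cut vector, and a standard union bound over the $O(n\log n)$ component queries closes the argument. The net cost is $O(\log n)\cdot O(\log^2 n)=\polylog{n}$ non-adaptive queries, matching the claim; this is essentially the construction of \cite{KLMMS17}.
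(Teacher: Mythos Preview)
Your argument is correct. It is precisely the Ahn--Guha--McGregor connectivity sketch: per-vertex $\ell_0$-sampler sketches of the signed incidence rows, the cancellation identity $\sum_{v\in S}\vec{m}_v$ equals the signed cut indicator of $S$, and Bor\r{u}vka in $O(\log n)$ rounds with a fresh sampler per round. One small parameter slip: you quote sampler failure probability $1/\polylog{n}$ but then union-bound over $O(n\log n)$ calls; you need inverse-polynomial failure, which standard $\ell_0$-samplers provide at the cost of an extra $\log n$ factor in width, still $\polylog{n}$ overall.

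The paper proves the same theorem by a different (though related) black box. Rather than $\ell_0$-sampling plus Bor\r{u}vka, it invokes the spectral sparsification sketch of \cite{KLMMS17}: there is an oblivious $\binom{n}{2}\times\polylog{n}$ matrix $\mat{S}$ such that from $\mat{M}\mat{S}$ one can reconstruct a $(1\pm 0.1)$-spectral sparsifier $H$ of $G$; since every cut value is preserved multiplicatively, $G$ is connected iff $H$ is. The paper does mention \cite{agm12} as an alternative source. Your route is the more elementary and self-contained one for connectivity specifically, and it even recovers a spanning forest; the paper's route is a heavier hammer but yields all cut values (and more) from the same $\polylog{n}$ queries.
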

% This  follows from the main theorem of \cite{KLMMS17}; see the full version of the paper.

This follows from the main theorem of \cite{KLMMS17} (also proved in the work \cite{agm12}). By the following theorem, every cut of $G$ is multiplicatively approximated and hence $G$ is connected iff $H$ is connected, since a graph is disconnected iff it has a zero cut.

\begin{theorem}[\cite{KLMMS17}]
%		There is a $(1\pm 0.1)$-\emph{spectral sparsifier} $H$ of $G$ that is computable with a $\bigo{n \polylog{n}}$ linear sketch from the row updates of the signed edge-vertex incidence matrix $\mat{M}$.

		There is a distribution on  $\binom{n}{2}\times \polylog{n}$ matrices $\mat{S}$ such that
		% on receiving every row $\vec{u}\in\zo^{\binom{n}{2}}$ the sketch updates the corresponding row with $\vec{u}\mat{S}$, and
		from $\mat{M}\mat{S}$, one can construct a $(1\pm 0.1)$-sparsifier ${H}$ of the graph $G$ with constant probability. Here, $\vec{x}^T \mat{L}_G x = (1 \pm 0.1) \vec{x}^T \mat{L}_H x$ for all $x$, with constant probability, where $\mat{L}_G$ and $\mat{L}_H$ are the corresponding graph Laplacians.
	\end{theorem}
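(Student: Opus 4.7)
The plan follows the graph-sketching framework of Ahn--Guha--McGregor, with the spectral upgrade of the cited work. The sketching matrix $\mat{S}$ will be a column-wise concatenation of several independent sub-sketches, one per primitive used by the recovery algorithm; we verify that the total column count stays $\polylog{n}$.

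The foundational primitive is $\ell_0$-sampling: there exists a distribution on $\binom{n}{2}\times\polylog{n}$ matrices $\mat{S}_0$ from which, given $\mat{y}^T\mat{S}_0$ for any unknown $\mat{y}\in\R^{\binom{n}{2}}$, one can with probability $1-1/\poly{n}$ output a nonzero coordinate of $\mat{y}$. The key algebraic observation that ties this to graphs is cut isolation: for any $T\subseteq V$ with indicator $\chi_T$, the vector $\chi_T^T\mat{M}\in\{0,\pm 1\}^{\binom{n}{2}}$ is supported exactly on $\partial T$, because every edge with both endpoints on the same side contributes $\pm 1\mp 1=0$. Hence $\chi_T^T(\mat{M}\mat{S}_0)$ is a sketch of the cut-edge indicator, computable by linearly combining the per-vertex rows already stored in $\mat{M}\mat{S}_0$, and $\ell_0$-sampling it returns a uniformly random edge crossing the cut.

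Using this primitive, I would recover a spanning forest of $G$ by simulating Boruvka's algorithm: in phase $\ell$, every current super-node $T$ forms $\chi_T^T(\mat{M}\mat{S}_0^{(\ell)})$, extracts a random incident edge, and contracts along recovered edges; after $O(\log n)$ phases every connected component is identified. Independent copies $\mat{S}_0^{(\ell)}$ are used per phase so the queried super-nodes are independent of the sketch. To get a $(1\pm 0.1)$ sparsifier rather than just a forest, I would run this spanning-forest procedure in parallel on $O(\log n)$ geometrically subsampled copies of $G$: level $\ell$ keeps each edge independently with probability $2^{-\ell}$, implemented by multiplying the columns of $\mat{M}$ by a fixed $\{0,1\}$ diagonal (a linear operation that can be folded into $\mat{S}$). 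By a Benczur--Karger / Nash-Williams cut-counting argument, re-weighting each forest edge recovered at level $\ell$ by $2^\ell$ yields a $(1\pm 0.1)$ cut sparsifier. For the spectral guarantee $\vec{x}^T\mat{L}_H\vec{x}=(1\pm 0.1)\vec{x}^T\mat{L}_G\vec{x}$, I would replace uniform edge subsampling with leverage-score / effective-resistance estimates obtained via additional heavy-hitter and $\ell_2$-sketches of the rows of $\mat{M}$, following Spielman--Srivastava; crucially, each of these sketches is linear and has dimension $\polylog{n}$, so the total column count of $\mat{S}$ remains $\polylog{n}$.

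The main obstacle is the interaction between adaptivity and linearity. Boruvka chooses which super-node cuts to query based on edges revealed in earlier phases, so a single $\ell_0$-sampler reused across phases becomes correlated with its own outputs and fails. The fix, as above, is to use one fresh independent copy of the sampler per phase; the cost is an extra $O(\log n)$ factor in the sketch dimension. A secondary subtlety is the union bound: in each phase up to $n$ super-nodes query the same sketch, so each $\ell_0$-sampler needs per-query failure probability $1/\poly{n}$, which is exactly what drives the $\polylog{n}$ (rather than $O(\log n)$) dimension. For the spectral step, one must further argue that effective resistances can be estimated to within constant factors from the sketch itself before sampling, so that the $\polylog{n}$ dimension is not inflated by a circular dependency between the sampling probabilities and the sparsifier being built; this is handled by iterating the construction, starting from a crude cut sparsifier and refining.
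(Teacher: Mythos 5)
The paper does not prove this theorem at all: it is imported verbatim from \cite{KLMMS17}, and the surrounding text says only ``This follows from the main theorem of \cite{KLMMS17} (also proved in the work \cite{agm12}).'' So there is no ``paper's proof'' to compare against; the relevant question is whether your blind reconstruction of the cited argument is sound, and it essentially is. You have the right structure: $\ell_0$-sampling sketches of the $\binom{n}{2}$-dimensional rows of $\mat{M}$, the cut-isolation identity that makes $\chi_T^T\mat{M}$ supported on $\partial T$ so that per-vertex sketches can be linearly combined to sketch any cut, Boruvka with fresh sketches per phase, geometric edge subsampling folded into $\mat{S}$ with a Benczur--Karger counting argument for the cut sparsifier, and $\ell_2$/heavy-hitter sketches to estimate effective resistances for the spectral upgrade, closed off by an iterative chain to break the circular dependence. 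That is the \cite{agm12}/\cite{KLMMS17} program.

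Two small imprecisions worth flagging. First, at each subsampling level you do not recover just a single spanning forest and reweight by $2^\ell$; you need $\Theta(\eps^{-2}\log n)$ edge-disjoint forests per level for the sampling concentration (Benczur--Karger / Fung et al.) to give a $(1\pm\eps)$ cut guarantee. This only changes $\polylog{n}$ constants but it is where the $\eps^{-2}$ dependence hides. Second, the refinement chain in \cite{KLMMS17} does not bootstrap from a ``crude cut sparsifier'': a cut sparsifier need not give even constant-factor effective-resistance estimates, so that starting point would not control the leverage-score sampling probabilities. The actual argument starts from a trivially spectrally-close regularization of the form $\mat{L}_G + \lambda\mat{I}$ (Peng--Spielman / Li--Miller--Peng style) and walks $\lambda$ down geometrically, so each step has a constant-factor spectral approximation in hand. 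With that correction your sketch matches the cited construction.
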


%	From the $(1\pm 0.1)$-sparsifier ${H}$, which is indeed a re-weighted subgraph of $G$, we can estimate $\vec{x}^T \mat{L}_G \vec{x}$ with $\vec{x}^T \mat{L}_{{H}} \vec{x}$ up to a multiplicative factor $(1\pm 0.1)$ for all vectors $\vec{x}\in\zo^n$, where $\mat{L}_G$ and $\mat{L}_{{H}}$ are the Laplacians of $G,H$.
	By the above, every cut of $G$ is multiplicatively approximated and hence $G$ is connected iff $H$ is connected, since a graph is disconnected iff it has a zero cut.

\subsection{Triangle Detection}\label{sec:triangle}
\label{sec:tri}

% Thus the same proof also establishes a lower bound to detect whether $\mat{M}$ contains a triangle or not, when $\mat{M}$ represents the adjacency matrix of a graph. The lower bound is summarized in the following corollary.

\begin{theorem}\label{thm:tri}
If an $n \times n$ matrix $\mat{A}$ is the adjacency matrix of a graph $G$, then determining whether $G$ contains a triangle or not requires $\bigomega{n/\log n}$ queries, even for randomized algorithms succeeding with constant probability.
\end{theorem}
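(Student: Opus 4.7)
The plan is to reduce two-party set \disjoint to triangle detection in this query model, following the classical tripartite gadget construction. Set $k = \lfloor n/3 \rfloor$ and partition the vertex set as $U \cup W \cup Z$ with $|U|=|W|=|Z|=k$. Given Alice's input $\mat{X} \in \{0,1\}^{k \times k}$ and Bob's input $\mat{Y} \in \{0,1\}^{k \times k}$, we build the graph $G$ as follows: Alice adds the edge $\{u_i, w_j\}$ iff $\mat{X}_{ij}=1$, Bob adds the edge $\{w_j, z_i\}$ iff $\mat{Y}_{ij}=1$, and we include the fixed matching edges $\{u_i, z_i\}$ for every $i \in [k]$ (known to both players). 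A triple $u_i, w_j, z_l$ forms a triangle iff $i=l$, $\mat{X}_{ij}=1$, and $\mat{Y}_{ij}=1$, so $G$ contains a triangle iff the characteristic vectors $\mathrm{vec}(\mat{X}), \mathrm{vec}(\mat{Y}) \in \{0,1\}^{k^2}$ are not disjoint.

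Next I would simulate any $q$-query (possibly adaptive, possibly randomized) algorithm $\mathcal{A}$ for triangle detection by a communication protocol for $\disjoint$ on $k^2 = \Theta(n^2)$ bits. Writing $\mat{A} = \mat{A}_\text{Alice} + \mat{A}_\text{Bob} + \mat{A}_\text{shared}$, where the three summands are the adjacency contributions from Alice's edges, Bob's edges, and the fixed matching, the players share the public random tape of $\mathcal{A}$. For each query vector $\vec{v}^{(t)}$ (which is then common knowledge given the history of responses), Alice sends $\mat{A}_\text{Alice}\vec{v}^{(t)}$ and Bob sends $\mat{A}_\text{Bob}\vec{v}^{(t)}$, and both add $\mat{A}_\text{shared}\vec{v}^{(t)}$, which they can compute on their own. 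They then jointly determine the next query, and after $q$ rounds both output $\mathcal{A}$'s answer. Since each response vector has $n$ entries and, under the standard assumption that query entries are integers of polynomially bounded magnitude (matching the convention used elsewhere in the paper, e.g.\ Theorem~\ref{thm:trace}), each entry carries $O(\log n)$ bits, so one round costs $O(n \log n)$ bits of communication.

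Combining this with the $\Omega(k^2) = \Omega(n^2)$ randomized communication lower bound for $\disjoint$, we obtain $q \cdot O(n \log n) = \Omega(n^2)$, hence $q = \Omega(n / \log n)$, which is the desired bound.

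The main obstacle is handling adaptivity together with unrestricted query entries: if an adversarial algorithm could use query vectors with arbitrarily large (or real-valued) entries, a single response $\mat{A}\vec{v}$ might, in principle, carry much more than $O(n \log n)$ bits. I would resolve this in the same way as elsewhere in the paper, by restricting queries to integers in $\{0,1,\dots,n^C\}$ (note that for the adjacency matrix, which is $\{0,1\}$-valued, such queries already suffice for any reasonable algorithm and the output entries are bounded by $n \cdot n^C$, still $O(\log n)$ bits each). The rest of the argument, including the exchange of responses and the invocation of the \disjoint lower bound, is then routine, and the reduction is oblivious to whether $\mathcal{A}$ is adaptive because the players mutually know the transcript after each round.
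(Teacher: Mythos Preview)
Your proof is correct and follows the same high-level strategy as the paper: simulate the query algorithm by a two-party communication protocol in which each matrix-vector product costs $O(n\log n)$ bits, and then invoke an $\Omega(n^2)$ communication lower bound.

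The only real difference is in how that $\Omega(n^2)$ bound is obtained. The paper simply \emph{cites} the known fact that the randomized two-party communication complexity of triangle detection, with edges split between Alice and Bob, is $\Omega(n^2)$ \cite{BKS02,FKMSZ08,HSSW12}, and then splits $\mat{A}=\mat{A}'+\mat{A}''$ accordingly. You instead \emph{rederive} this bound by explicitly building the standard tripartite gadget reducing $\disjoint$ on $\Theta(n^2)$ coordinates to triangle detection. Your route is more self-contained (and makes the hard input distribution concrete), while the paper's route is shorter; but since the cited triangle-detection lower bound is itself proved via exactly this tripartite construction, the two arguments are essentially the same once unwound. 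Your handling of adaptivity and of the bit-complexity assumption on query entries matches the paper's conventions.
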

\begin{proof}
To obtain a lower bound on $q(n)$, we use a $2$-player communication lower bound of counting the number of triangles in a graph $G$, where the edges are distributed across the two players, Alice and Bob.
Namely, it is known \cite{BKS02, FKMSZ08, HSSW12} that if Alice has a subset of the edges of $G$, and Bob has the remaining (disjoint) subset of edges of $G$, then the multiround randomized communication complexity of deciding if there is a triangle in $G$ is $\Omega(n^2)$.
Alice can view her subset of edges as an adjacency matrix $\mat{A}'$, and Bob can view his subset of edges as an adjacency matrix $\mat{A}''$, so that $\mat{A} = \mat{A}' + \mat{A}''$. To execute the query algorithm on $\mat{A}$, Alice sends $\mat{A}'\vec{x}_1$ to Bob, who computes $\mat{A}''\vec{x}_1$ followed by $\mat{A}'\vec{x}_1 + \mat{A}''\vec{x}_1 = \mat{A}\vec{x}_1$, and sends the result back to Alice.
Alice then possibly adaptively chooses $\vec{x}_2$, which is also known to Bob who knows $\vec{x}_1$ and $\mat{A}\vec{x}_1$, and sends Bob $\mat{A}'\vec{x}_2$, from which Bob can compute $\mat{A}''\vec{x}_2$ and $\mat{A}\vec{x}_2 = \mat{A}'\vec{x}_2 + \mat{A}''\vec{x}_2$.
This process repeats until the entire $q(n)$ queries have been executed, at which point Bob, by the success guarantee of the algorithm, can decide if $G$ contains a triangle with say, probability at least $2/3$.
Because of the bounds on the bit complexity of the queries while the total communication is $\bigo{q(n) n \log n}$, which must be $\Omega(n^2)$, and consequently $q(n) = \Omega(n/\log n)$, as desired.
\end{proof}

% \newpage
% \input{rank}

% \input{trace}

% \input{others}

\section{Conclusions}
We initiated the study of querying a matrix through matrix-vector products. 
%and presented a wide array of bounds for problems in numerical linear algebra, graphs, statistics, and streaming. 
We illustrated that for some quantities, if one can only query matrix-vector products on one side, the problem becomes harder. We also illustrated the importance of the underlying field defining the matrix-vector products, as well as the representation of the graph for graph problems. Given connections to sketching algorithms, streaming, and compressed sensing, we believe this area deserves its own study. Some interesting open questions are for computing matrix norms, such as Schatten-$p$ norms, for which tight bounds in streaming and communication complexity models remain elusive; for recent work on this see \cite{LW16,LW17,BCKLWY18}. Given the success of our model in proving lower bounds for approximate rank, which we also do not have streaming or communication lower bounds for, perhaps tight bounds in our query model are possible for matrix norms. Such bounds may give insight for other models. %Another interesting direction is to study the query complexity of determining if an object satisfies other desirable properties in complexity theory, such as being an error-correcting code, expander, extractor, or disperser with certain parameters. 

%%%%%%%%%%%%%%%%%%%%%%%%%%%%%%%%%%%%%%%%%%%%%%%%%%%%%%%%%%%%%%%%%%%%%%%%%%%%%%%%
%%%%%%%%%%%%%%%%%%%%%%%%%%%%%%%%%%%%%%%%%%%%%%%%%%%%%%%%%%%%%%%%%%%%%%%%%%%%%%%%
%%
%% Bibliography
%%

%% Please use bibtex,

\bibliography{ref}
\newpage
\appendix

\end{document}